\setlist[enumerate]{leftmargin=*}
\setlist[itemize]{leftmargin=*}
\newcommand{\dol}{\mbox{\textdollar}} 
\newlist{longenum}{enumerate}{6}
\setlist[longenum,1]{label=\arabic*.}
\setlist[longenum,2]{label=\alph*.}
\setlist[longenum,3]{label=\roman*.}
\setlist[longenum,4]{label=(\arabic*)}
\setlist[longenum,5]{label=(\alph*)}
\setlist[longenum,6]{label=(\roman*)}
\newenvironment{game}
{ \begin{itemize}[noitemsep,nolistsep] 
}
{ \end{itemize}                  }
\newcommand{\s} {\textrm{ }}
\newcommand{\li} {\lstinline}
\newcommand{\kv} {$(k, v)$}
\newcommand{\lar}{\leftarrow}
\newcommand\BLINDFORSUBMISSION[2]{#2}
\begin{document}

\title{Verified Correctness and Security of mbedTLS HMAC-DRBG}

\BLINDFORSUBMISSION{
  \author{Authors anonymized for submission to ACM CCS 2017}
  }{
\author{Katherine Q. Ye}
\affiliation{
  \institution{Princeton U., Carnegie Mellon U.}
}

\author{Matthew Green}
\affiliation{
  \institution{Johns Hopkins University}
}

\author{Naphat Sanguansin}
\affiliation{
  \institution{Princeton University}
}

\author{Lennart Beringer}
\affiliation{
  \institution{Princeton University}
}

\author{Adam Petcher}
\affiliation{
  \institution{Oracle}
}

\author{Andrew W. Appel}
\affiliation{
  \institution{Princeton University}
}
}

\newcommand{\mdg}[1]{}
\newcommand{\awa}[1]{}
\newcommand{\kye}[1]{}
\newcommand{\lenb}[1]{}

\begin{abstract}

  We have formalized the functional specification of HMAC-DRBG (NIST 800-90A), and we have proved its cryptographic security---that its output is pseudorandom---using a hybrid game-based proof. We have also proved that the mbedTLS implementation (C program) correctly implements this functional specification. That proof composes with an existing C compiler correctness proof to guarantee, end-to-end, that the machine language program gives strong pseudorandomness. All proofs (hybrid games, C program verification, compiler, and their composition) are machine-checked in the Coq proof assistant. Our proofs are modular: the hybrid game proof holds on any implementation of HMAC-DRBG that satisfies our functional specification. Therefore, our functional specification can serve as a high-assurance reference.

\end{abstract}

%
%




\maketitle


\section{Introduction}
\label{sec:intro}
Cryptographic systems require large amounts of randomness to generate keys, nonces and initialization vectors. Because many computers lack the large amount of high-quality physical randomness needed to generate these values, most cryptographic devices rely on pseudorandom generators (also known as {\em deterministic random bit generators} or DRBGs) to ``stretch'' small amounts of true randomness into large amounts of pseudorandom output.\footnote{A note on terminology: we use ``entropy'' loosely to denote randomness that is not predictable by an adversary. We use ``sampled uniformly at random'' and ``ideally random'' interchangeably. We use PRG, the acronym for ``pseudo-random generator,'' to refer to the abstract cryptographic concept, whereas we use DRBG, the acronym for ``deterministic random bit generator,'' to denote the specifications and implementations of PRGs. Instead of DRBG, some papers use ``PRNG,'' the acronym for ``pseudo-random number generator.'' The terms are synonymous.} 

Pseudorandom generators are crucial to security. Compromising a generator (for example, by selecting malicious algorithm constants, or exfiltrating generator state) can harm the security of nearly any cryptosystem built on top of it. The harm can be catastrophic: for example, an adversary who can predict future outputs of the generator may be able to predict private keys that will be generated, or recover long term keys used as input to the protocol execution.

Moreover, it may be impossible to test that a DRBG has been compromised if one is limited to black-box testing. Current validation standards \cite{fips140-2,iso19790,NISTRNGToolkit} primarily resort to statistical tests and test vectors, neither of which guarantee that the output is pseudorandom.
One can even construct backdoored PRGs that cannot be detected by black-box testing \cite{dodis2015formal}.

Despite the importance of DRBGs, their development has not received the scrutiny it deserves. Many catastrophic flaws in DRBGs have been uncovered at both the design level and the implementation level. Some bugs arise from simple programming mistakes. The Debian DRBG, though open-source, was broken for two years because a line of code was erroneously removed, weakening the DRBG's seeding process~\cite{debianflaw}. Several recent projects have identified and factored large numbers of weak RSA keys produced due to improper generator seeding~\cite{Heninger12,Lenstra12,Bernstein13}. A bug in entropy use in the Android DRBG resulted in the theft of \$5,700 of Bitcoin \cite{goodin13}.

While those flaws were accidental, some are malicious. Adversaries intentionally target DRBGs because breaking a DRBG is an easy way to break the larger cryptosystem. The most notorious example is the NSA's alleged backdooring of the Dual EC DRBG standard~\cite{sp800-90A,Shumow07,NYT_DualEC_2013,DBLP:conf/birthday/BernsteinLN16}. In the Dual EC design, the malicious choice of a single algorithm parameter (an elliptic curve point $Q$) allows for full state recovery of the DRBG given only a small amount of raw generator output. This enables the passive decryption of protocols such as TLS and IPSEC/IKE~\cite{Shumow07,Checkoway14,Checkoway16}. From 2012 to 2015 this backdoor was exploited by unauthorized parties to insert a passive decryption backdoor into Juniper NetScreen VPN devices~\cite{Wired_Juniper_2015,Checkoway16}. Remarkably, a related DRBG flaw in Fortinet VPNs created a similar vulnerability in those devices during the same time period~\cite{FortinetBug}.

A key weakness in the deployment of DRBGs is that current government standards both encourage specific designs and lack rigor. The FIPS validation process (required by the U.S. government for certain types of cryptographic device) mandates the use of an ``approved'' NIST DRBG in every cryptographic module. As a consequence, a small number of DRBG designs have become ubiquitous throughout the industry. These algorithms lack formal security proofs, and their design processes were neither open nor rigorous---in fact, the designs were found to include errors. Even worse, it is easy to implement certain of these DRBGs such that the output of the generator is predictable (given detailed knowledge of the implementation), and yet without this knowledge {\em the output of the generator is computationally indistinguishable from random} \cite{dodis2015formal}. Unfortunately, the existing formal validation processes for verifying the correctness of DRBG implementations are weak and routinely ignore entire classes of flaws. Given that the widespread deployment of a weak DRBG can undermine the security of entire computer networks, we need a better way to validate these critical systems.


\subsection{Contributions}

DRBGs have the special property that testing, even sophisticated statistical fuzz-testing, cannot assure the security of an implementation. Even at the level of a pure cryptographic protocol, proving the security of a DRBG construction can be quite challenging. (Hybrid-game-based proofs are the only known technique, but for nontrivial DRBGs those proofs have so many technical steps that it is hard to trust them.) Therefore a new paradigm is required, in which the functional model of a DRBG is proved \emph{with a machine-checked proof} to satisfy the appropriate PRF properties, and the C language implementation (\emph{and} its compilation to machine language) is proved \emph{with a machine-checked proof} to correctly implement the functional model, and these proofs are linked together in the same proof-checker.

We present machine-checked proofs, in Coq, of many components, connected and checked at their specification interfaces so that we get a truly end-to-end result:
\textit{Verson 2.1.1 of the mbedTLS HMAC-DRBG correctly implements the NIST 800-90A standard, and HMAC-DRBG \textbf{Generate} and \textbf{Update} as described in that same NIST 800-90A standard indeed produces pseudorandom output, subject to the standard assumptions\footnote{The standard PRF assumptions about the security of SHA-2 are stated by Beringer \emph{et al.}\cite[\S 4.2]{beringer15:hmac}. No one knows how to prove these, but it is not controversial to assume them. See also \S\ref{concrete} of our paper.}
 about SHA-2, as well as certain assumptions about the adversary and the HMAC-DRBG instantiation that we state formally and explicitly. We have not proved the security of \textbf{Instantiate} (see \S\ref{thebad})
and \textbf{Reseed}.} 

To prove this theorem, we took the following steps.

\begin{enumerate}
\item We constructed a proof that the version of HMAC-DRBG described in the standard NIST SP 800-90A generates pseudorandom output, assuming that HMAC is a pseudorandom function and that the DRBG's state is instantiated with ideal randomness. Our proof is similar to that of Hirose \cite{hirose2008security}, but more clearly and explicitly structured. By ``pseudorandom,'' we mean that a nonadaptive\footnote{In many typical applications of DRBGs, such as TLS, adversaries are nonadaptive because they are not given control over parameters such as the number of requests to the DRBG or the block length.} probabilistic polynomial-time adversary that requests the maximum amount of output from HMAC-DRBG cannot distinguish its output from uniformly random output with nonnegligible probability.

\item We mechanized the proof in (1),
  formalizing the cryptographic algorithm \emph{(``crypto spec'')} and the
  security theorem in Coq, instantiating HMAC as HMAC/SHA-256,
  and assuming that the DRBG's initial state is ideally random.
Our crypto spec was written with respect to the NIST standard SP~800-90A \cite{sp800-90A} which defines HMAC-DRBG. It was written in the probabilistic programming language provided by the Foundational Cryptography Framework \cite{petcher15:post}, which is itself embedded in the Coq proof assistant.

\item We proved that the mbedTLS implementation of HMAC-DRBG (written in C)
  correctly implements a functional specification, also derived
  from SP~800-90A,   written as a pure function in Coq.
  We performed the verification using the Verified Software Toolchain (VST) framework \cite{appel14:plcc}, which is also embedded in the Coq proof assistant. 

\item We proved that our two functional specs for HMAC-DRBG
  (both derived from SP~800-90A) are equivalent,
  by induction in Coq. This connects the C-program correctness
  proof to the cryptographic pseudorandomness proof.

\item Beringer \emph{et al.} \cite{beringer15:hmac}
  and Appel \cite{appel15:sha}
  proved that an OpenSSL implementation of HMAC with SHA-256
  implements the FIPS-198 and FIPS-180 functional specs (respectively),
  and those in turn implement a PRF with bounded attacker advantage, subject
  to some standard (unproven but uncontroversial) assumptions about the
  security of SHA-256's compression function.

\end{enumerate}

Composing these proofs (which we do in the Coq proof assistant without any gaps at the interfaces) yields our main result.
This modular approach allows one to incrementally verify even complex cryptographic protocol implementations in real languages.

\paragraph{The trusted base.}  Appel \cite[\S 12]{appel15:sha} discusses at length
the trusted code base of this approach.
In summary: the mbedTLS DRBG program need not be trusted, as
it is proved to implement the spec; the compiler need not be trusted,
as it is proved correct; the NIST spec (or our formalization of it)
need not be trusted, because it is proved to be a PRF.
FCF need not be trusted, because it is proved sound in Coq (only
the \emph{definitions} used in the statement of the PRF theorem need
be trusted, because if you prove the wrong theorem it's not so useful).
VST need not be trusted, because it is proved sound w.r.t. the
operational semantics of CompCert C.  The CompCert verified optimizing
C compiler need not be trusted, because it is proved correct w.r.t.
that same operational semantics and the op. sem. of the
assembly language.
The op. sem. specification of CompCert's
source language (C) need not be trusted, because it's just an internal
interface of the proof (between VST and CompCert).
 What must be trusted are: the specification
of the assembly language ISA (as an operational-semantic statement in Coq);
the statement of the main theorem and the definitions it relies on
 (one must make sure that the
right property is proved); and the Coq kernel (proof checker),
along with the OCaml compiler that compiled Coq, and the OCaml
compiler that compiled the OCaml compiler (it's turtles all
the way down \cite{thompson1984reflections}). One could also compile mbedTLS with gcc or clang, and in that case
the compiler would be part of the TCB.

Our DRBG proof is for mbedTLS, and our HMAC proof is for OpenSSL,
so technically our end-to-end proof is for an implementation with
those two components stapled together. \kye{Anonymize ``our''}

We prove ``only'' functional correctness, which has useful corollaries
of safety and direct information flow: no buffer overruns, no reading
from or writing to unspecified locations.  We do not address side
channels---which are not a functional-correctness property---but
because we verify standard C-language implementations, any other
analysis tools for side channels in C programs can be applied. Closer
to machine code, our approach is compatible with Barthe \emph{et
  al.}'s methodology of combining functional and implementational
correctness proofs with formal proofs of leakage security, based on a
noninterference analysis for CompCert's low-level intermediate
language Mach~\cite{DBLP:conf/ccs/BartheBCLP14}.

\paragraph{Coq proofs.}
For the functional specification, cryptographic specification, proof of
\label{coq-proofs}
functional correctness, and security proof, see \newline
\textit{github.com/PrincetonUniversity/VST/tree/master/hmacdrbg}.
\linebreak
The README file in that directory
explains how the Coq files correspond to sections of this paper.

\subsection{Prior work}

Hirose \cite{hirose2008security} proves the pseudorandomness of the full HMAC-DRBG on paper, also with respect to a nonadaptive adversary. However, the proof is not linked to any implementation of HMAC-DRBG. Moreover, though it is well-written, the six-page Hirose proof is challenging to read and verify. For example, Hirose justifies his arguments informally; several times, he states that an important implication is ``easy to see'' or that it ``directly follows from the preceding lemmas.'' Hirose also asserts that HMAC-DRBG is secure when the user provides an optional additional input on initialization; however, he only states the relevant lemmas without proving them. We do not prove any properties relating to additional input.

To perform proofs more readably and rigorously on complex constructions such as DRBG, we harness the power of mechanization. We mechanize our proof in the Coq proof assistant, using the embedded Foundational Cryptography Framework (FCF). In contrast to working on paper, mechanizing a proof requires that all lemmas and relationships between lemmas be precisely stated. Additionally, FCF encourages proofs to be done in the game-playing style, which makes the intermediate definitions more explicit (because they must all be written as code) and the proof as a whole easier to read and verify. See \S \ref{sec:fullproof} for an outline of the structure of our proof.

More specifically, the Coq proof assistant assists the user by providing an interactive development environment and user-prog\-ram\-mable automation to construct a formal syntactic proof. Then, the Coq kernel checks that this term proves the stated theorem. Coq has been used to prove mathematical theorems, cryptographic theorems, C program correctness, and C compiler correctness. The latter three are relevant to our work.

Affeldt et al. \cite{affeldt2012certifying} present an end-to-end
machine-checked (in Coq) proof of the Blum-Blum-Shub PRG, proving
cryptographic security of the algorithm and correctness of the
implementation.  They mechanize an existing, well-known proof of BBS's security.
Their cryptographic games are linked
directly to their own assembly code, whereas our games apply to a high-level
specification of the DRBG in question, and connect to a widely
used C program.  BBS is not used
in practice because it is quite slow \cite{sidorenko05}.

Appel \cite{appel15:sha} proved in Coq that OpenSSL SHA-256
correctly implements the FIPS-180 specification.
Beringer \emph{et al.} \cite{beringer15:hmac}
proved in Coq
the PRF security of the FIPS-198 specification of
HMAC/SHA-256, and that the OpenSSL implementation of
HMAC correctly implements FIPS-198.  

Petcher \cite{petcher15:phd} applies mechanization to demonstrate a \textit{negative} example, rather than a positive one. He demonstrates how the process of mechanizing the proof of security of Dual-EC-DRBG naturally results in an unprovable theorem, which reveals the weakness in the DRBG's design.
Our mechanized proof of HMAC-DRBG shows which parts have no flaws
(\emph{Generate} and \emph{Update}), and which are problematic (\emph{Instantiate}).

Inspired by the Android bug described earlier, the Entroscope tool
\cite{DorreKlebanov2016} uses information flow and static analysis
(with bounded model checking) to detect entropy loss in the
seeding of a PRNG.  Entroscope has been successfully applied to find a
bug in the Libgcrypt PRNG, and would have detected the Debian incident~\cite{debianflaw}.

Higher-level protocols such as TLS have been
proved secure with machine-checked proofs \cite{bhargavan2013implementing},
although without proving security of primitives such as DRBGs or HMAC.
Our work complements such proofs.

No prior work has proved full cryptographic security and correctness of a DRBG that links all of its components: a specification on paper, a proof of security on paper, a widely used implementation, and its correct
compilation to assembly code.
The DRBG flaws listed earlier demonstrate that a DRBG may contain flaws anywhere (and everywhere!) in that stack. Thus, it is imperative to verify DRBGs, and cryptographic primitives in general, in a manner that links all of their components.

\section{Deterministic Random Bit Generators}
\label{sec:drbgsec}

\newcommand{\instantiate}{\textbf{Instantiate}}
\newcommand{\update}{\textbf{Update}}
\newcommand{\reseed}{\textbf{Reseed}}
\newcommand{\generate}{\textbf{Generate}}
\newcommand{\negl}{\mathbf{negl}}
\newcommand{\prg}{{\sf PR}}
\newcommand{\AlgA}{A}
\newcommand{\compind}{\stackrel{c}{\approx}}
\newcommand{\pushin}{\hspace{.15in}}

In 2007 the U.S. National Institute of Standards and Technology (NIST) released Special Publication 800-90 describing several pseudorandom generator constructions, replaced in 2012 by 800-90A ~\cite{sp800-90A}.
Termed {\em deterministic random bit generators} in NIST's parlance, these generators have been widely adopted in many commercial products, including many operating systems and TLS libraries. This adoption is largely because SP800-90A generators are required in modules validated by the U.S. government's Cryptographic Module Validation Program~\cite{fips140-2}. 

The NIST specification describes a specific algorithmic interface as well as an informal list of security goals. We present this below.

\noindent
\begin{definition}
A DRBG is a tuple of algorithms $(\instantiate, \allowbreak \update, \allowbreak \reseed, \allowbreak \generate)$ with the following interface.

\begin{description}
\item[] $\instantiate(entropy, nonce).$ On input of an initial entropy string and a personalization nonce, outputs the initial state $\langle{k,v}\rangle$.
\item[] $\update(\mathit{data}, \langle{k, v}\rangle).$ On input of (optional) \emph{data} string (for seeding) and a generator state $\langle{k,v}\rangle$, outputs a new state $\langle{k', v'}\rangle$.
\item[] $\reseed(\langle{k, v}\rangle, entropy).$ On input of the current generator state and an entropy string, output a new state $\langle{k',v'}\rangle$.
\item[] $\generate(\langle{k, v}\rangle, n).$ On input of the current generator state and a number of blocks $n$, output a string of pseudorandom output as well as a new state $\langle{k,v}\rangle$.
\end{description}
\end{definition}

\noindent
The NIST standard informally describes several security properties that a DRBG must possess. These include the standard notion of pseudorandomness, as well as several auxiliary properties. In this work we focus primarily on the pseudorandomness property.

\begin{definition}[Pseudorandomness of the DRBG] \label{def:pseudorandomness}
\label{def:cpa}
~\newline
Let $\Pi= (\instantiate, \allowbreak \update, \allowbreak \reseed, \allowbreak \generate)$ be a DRBG, let $c$ denote the length of the internal state vector, and let the random variable
$\prg_b(\Pi,\AlgA,\lambda)$ where $b\in \{0,1\}$,
and $\lambda \in \mathbb{N}$ denote the result of the
following probabilistic experiment:

\begin{center}
\begin{tabular}{l@{\hspace{.3in}}l}
    ${\prg}_b(\Pi,\AlgA,\lambda)~~:=$ &
      \pushin $E \gets \{0,1\}^{\lambda}$ \\
&    \pushin $(k,v) \gets \instantiate(E, \bot)$ \\
&    \pushin $B \gets \AlgA^{{\mathcal O}_{{\sf Generate}}(b, \cdot)}(\lambda)$\\
&    \pushin Output $B$
\end{tabular}
\end{center}
\noindent where ${\mathcal O}_{\sf Generate}(b, n)$ defines an oracle that is initialized with initial state $(k,v)$ and if $b=0$ runs the $\generate$ algorithm using the internal state and then updates the internal state.
The security parameter $\lambda$ is the number of entropy bits
seeding the DRBG.
If $b=1$, the oracle samples a uniformly random string of length $nc$ (where $c$ is the block length) and returns this to $\AlgA$. A DRBG scheme $\Pi$ is pseudorandom if $\forall$ p.p.t.\ algorithms $\AlgA$ the following two ensembles are computationally indistinguishable:
$$ \Bigl\{ \prg_0(\Pi,\AlgA,\lambda) \Bigr\}_{\lambda}
 \ \compind \
   \Bigl\{ \prg_1(\Pi,\AlgA,\lambda) \Bigr\}_{\lambda}
$$
\end{definition}

In this work we will focus primarily on the $\generate$ and $\update$ algorithms, while leaving analysis of the $\instantiate$ algorithm for future analysis (we discuss the reasons in Section~\ref{goodbadugly}). Thus in the remainder of this work we will employ the simplifying assumption that the initial state $(k,v)$ is uniformly random.

\subsection{HMAC-DRBG} \label{sec:hmac_drbg_design}

HMAC-DRBG is one of the four pseudorandom generator constructions formalized in NIST SP 800-90A. It works by iterating HMAC, a keyed-hash message authentication function that can be instantiated using a cryptographic hash function such as SHA. 
HMAC is useful in constructing a PRG. When instantiated with an efficient cryptographic hash function it is highly efficient. Bellare~\cite[\S5]{bellare2006new} shows that HMAC is a {\em pseudorandom function} if certain assumptions hold on the hash functions's internal compression function.

HMAC-DRBG has an internal state consisting of two pieces of administrative information, which are constant across calls, and two pieces of secret information, which are updated during each call to the PRG.
The administrative component of the state contains the security strength of the PRG's instantiation, for which the \emph{Instantiate} function obtains the appropriate amount of entropy for that security strength, and a flag that indicates whether this instantiation requires prediction resistance. 

The secret component of the internal state is the working state \kv{},
where $k$ is a secret key of length $c$ for the internal PRF (e.g., HMAC). Each time the PRG generates a string of pseudorandom bits (by calling the $\generate$ function), $k$ is updated at least once. The $v$ component holds the newest ``block'' of pseudorandom bits that the PRG has generated. The length of $v$ is the length of HMAC's output, which is fixed (e.g., 256 bits).


Here we will rephrase the NIST standard in a mostly functional style that closely mirrors our specification in Coq.
Let $f$ denote HMAC and $f_k$ denote HMAC partially applied with the key $k$. Let $||$ denote concatenation. \\

\emph{\update} refreshes $(k, v)$, optionally using some $\mathit{data}$, which can be fresh entropy for prediction resistance.\\

\emph{\instantiate} initializes $(k,v)$ and \emph{reseed\_counter}.  For simplicity, we omit the \emph{additional\_input} and \emph{personalization\_string} parameters from \emph{Instantiate}
the ``crypto spec'' that we present here;
these optional parameters allow for the provision of additional entropy to the instantiate call.
Our proof of the mbedTLS C code (see \S\ref{sec:mbedtlsVerification}) does
handle these parameters.
The omitted parameters are generally used for fork-safety, so therefore
we have not proved fork-safety.
\pagebreak

\noindent $\textbf{Update}(\mathit{data}, k, v) := $
\begin{game}
\item[] $k_1 \lar f_k(v || 0x00 || data)$
\item[] $v_1 \lar f_{k_1}(v)$
\item[] if $(\mathit{data} = \mathrm{nil})$
\item[] then return $(k_1, v_1)$
\item[] else $k_2 \lar f_{k_1}(v_1 || 0x01 || \mathit{data})$
\item[] \qquad $v_2 \lar f_{k_2}(v_1)$
\item[] \qquad return $ \s (k_2, v_2)$ \\
\end{game}

\noindent $\textbf{Instantiate}(entropy, nonce) := $
\begin{game}
\item[] $\mathit{seed} \lar \mathit{entropy} || \mathit{nonce}$
\item[] $k \lar 0x00 \ldots 00$
\item[] $v \lar 0x01 \ldots 01$
\item[] $(k_1, v_1) \lar \mathbf{Update}(\mathit{seed}, k, v)$
\item[] $\mathit{reseed\_counter} \lar 1$
\item[] return $\s (k_1, v_1, \mathit{reseed\_counter})$
\end{game}

$\reseed$ could be called for two reasons. First, \emph{reseed\_counter} may have exceeded \emph{reseed\_interval}. This is rare, since \emph{reseed\_interval} is set to high values in practice; its maximum value is $2^{48}$, meaning it would naturally reseed once every couple of million years. (Indeed, in practice, one does not want a PRG to reseed often. This would give an attacker more opportunities to compromise the entropy source.) More commonly, $\reseed$ is called when a PRG's state could have been compromised, and requires fresh entropy to be mixed in.\\ 



\noindent $\textbf{Generate}(k, v, n) := $\qquad 
\begin{game}
 \item[] (* We don't model the reseed test.
 \item[] if $\mathit{reseed\_counter} > \mathit{reseed\_interval}$ then \ldots reseed \ldots
 \item[] else *)
\item[] \qquad $\mathit{temp}_0 \lar nil$
\item[] \qquad $v_0 \lar v$
\item[] \qquad $i \lar 0$
\item[] \qquad while $\mathrm{length}(\mathit{temp}_i) < n$
\item[] \qquad ~do~ $v_{i+1} \lar f_k(v_i)$
\item[] \qquad\qquad $\mathit{temp}_{i+1} \lar \mathit{temp}_i || v_{i+1}$
\item[] \qquad\qquad  $i \lar i+1$
\item[] \qquad $\mathit{returned\_bits} \lar$ leftmost $n$ of $\mathit{temp}_i$
\item[] \qquad $(k_1, v_{i+1}) \lar \mathbf{Update}(k, v_i)$
\item[] \qquad $\mathit{reseed\_counter}_1 \lar \mathit{reseed\_counter} + 1$
\item[] \qquad return  $ \s (bits, k_1, v_{i+1}, \mathit{reseed\_counter}_1)$\\
\end{game}

\noindent 
$\textbf{Reseed}(k, v, entropy) := $
\begin{game}
\item[] $(k_1, v_1) \lar Update(seed, k, v)$
\item[] $reseed\_counter \lar 1$
\item[] return $ \s (k,v, reseed\_counter)$ \\
\end{game}

\noindent
From the user's point of view, only the \emph{Instantiate} and \emph{Generate} functions are visible. The state is hidden. Typical usage consists of a call to \emph{Instantiate}, followed by any number of calls to \emph{Generate} to produce pseudorandom bits. \emph{Generate} automatically calls \emph{Update} every time it is called, and \emph{Generate} may force a call to \emph{Reseed}.


\pagebreak
\section{Overview of the proof}

Here we explain the techniques used in our proof.
Section~\ref{sec:mechanized} presents the mechanized proof in detail.

\subsection{Key result: HMAC-DRBG is pseudorandom} \label{maintheorem}
We prove the pseudorandomness property of HMAC-DRBG. As with traditional proofs, we show the computational indistinguishability of two main ``experiments.'' In the first (or ``real'') experiment, which roughly corresponds to the experiment $\prg_0$ in \S\ref{sec:drbgsec}, the adversary interacts with an oracle whose generator uses HMAC-DRBG. In the second (or ``ideal'') experiment, the adversary interacts with an oracle that produces uniformly random output. This corresponds to $\prg_1$. Our goal is to connect these experiments via a series of intermediate ``hybrid'' experiments, which we will use to show that the real and ideal experiments are computationally indistinguishable by any probabilistic polynomial time (p.p.t.) adversary.

\begin{description}
\item {\bf The real experiment ($G_\mathrm{real}$).} In this experiment, a p.p.t. adversary $A$ makes a sequence of calls to an oracle that implements the $\generate$ function of DRBG. At the start of the experiment, the generator is seeded with a uniformly random generator state $\langle{k,v}\rangle$, and at each call the adversary requests a fixed number $n$ of blocks of generator output, where $n$ is chosen in advance. The oracle returns the requested output, then updates its internal state as required by the $\generate$ algorithm. At the conclusion of the experiment, the adversary produces an output bit $b$.

\item {\bf The ideal experiment ($G_\mathrm{ideal}$).} The adversary $A$ interacts with an oracle according to the same interface as in the previous experiment. However, this oracle simply returns a uniformly random string of length $n$ blocks. The oracle maintains no internal state. At the conclusion of the experiment, the adversary produces an output bit $b$.
\end{description}

\noindent
Stated informally, the probability that $A$ outputs $1$ when interacting with $G_\mathrm{real}$, minus the probability
that the adversary outputs $1$ when interacting with $G_\mathrm{ideal}$, is a measure of how successful $A$ is in distinguishing the pseudorandom
generator from a random generator. Our goal is to bound this quantity to be at most a negligible function of the generator's security parameter (see \S\ref{concrete}).

Our proof connects the {\em real} and {\em ideal} experiments by defining a series of intermediate, or hybrid, experiments. These hybrid experiments comprise a transition between the {\em real} and {\em ideal} experiments. We show that for each successive pair of hybrids---beginning with the real experiment and ending with the ideal experiment---the distribution of the adversary's output must be at most negligibly different between the two experiments in the pair. Having performed this task for each pair of experiments leading from {\em real} to {\em ideal}, we then apply a  standard hybrid argument to prove that if all intermediate hybrids satisfy our condition, then the real and ideal experiments themselves must also be computationally indistinguishable. It remains now to outline our hybrids. 

\medskip \noindent 
{\em Overview of the hybrids.} Recall that the HMAC-DRBG generation procedure consists of an iterated construction with two main stages. Given a uniformly random internal state $(k,v_0)$ the first phase of generation operates as follows: a call to the $\generate$ function produces a sequence of output blocks which are calculated by applying the HMAC function $f_k(v_0)$ to generate a block of output $v_1$, and then applying $f_k (v_1)$ to obtain $v_2$ and so on until $n$ blocks have been generated. The second phase of $\generate$ {updates} the internal state of the generator to produce a new state $(k',v'_0)$ which will be used in the subsequent call to $\generate$.


Our proof defines a set of hybrids that shows that {\em at each call to $\generate$}, given integer $n$ (whose size is polynomial in $\lambda$, the number
of entropy bits)
and a uniformly random initial state $(k,v_0)$, the output of $\generate(k,v,n)$ is indistinguishable from a random string. Using these hybrids, we show that the PRG, over all calls to $\generate$, is secure for any fixed number of calls \emph{numCalls} made by the adversary. 

We define the following sequence of hybrids:

\begin{description}
\item$G_\mathrm{real}$. This is the real experiment in which all calls are answered using the HMAC-DRBG $\generate$ algorithm.

\item $G_{1\dots \mathit{numCalls}}$. For each hybrid indexed by $i \in [1, \ldots, \mathit{numCalls}]$, we modify the experiment such that the first $i$ calls are replaced with a function that simply returns ideally random bits and a new uniformly random state $(k',v'_0)$ when queried on any input.
Internally this is achieved using two additional hybrids:
\begin{enumerate}
\item {\bf Internal hybrid 1.} In this hybrid, the first $i-1$ oracle calls produce random output. In call $i$ we modify the output to sample a random function $r$ and use $r$ in place of the PRF $f$ to compute the $\generate$ algorithm.
\item {\bf Internal hybrid 2.} Identical to the previous hybrid, except that we substitute each output of the random function with a string of uniformly random bits.
\end{enumerate}
\item $G_\mathrm{ideal}$. This is the ideal experiment.
\end{description}

In the first hybrid, we replace the HMAC function $f$ with a random function $r$. We are able to show that this replacement does not affect $A$'s output if $f$ is pseudorandom, as this would imply the existence of a distinguisher for the pseudorandom function, a fact which would directly contradict our assumption. With this replacement complete, hybrid $(2)$ requires us to make specific arguments about the input to the random function $r$, namely that the inputs to the random function do not collide. Thus, the core of our proof is a sequence of theorems arguing that, from the perspective of $A$, the changes we make at each of the intermediate hybrids described above induces only a negligible difference in $A$'s output. We conclude then by making a standard hybrid argument that bounds the overall distance of $G_\mathrm{real}$ and $G_\mathrm{ideal}$ to be negligible in $\lambda$. We calculate a concrete bound on $A$'s ability to distinguish between $G_\mathrm{real}$ and $G_\mathrm{real}$.


\section{Mechanized security proof}
\label{sec:mechanized}

In this section, we describe our machine-checked proof of HMAC-DRBG's pseudorandomness. All definitions and theorem statements are taken straight from the Coq proofs and may be looked up by name in the file. See the end of \S\ref{coq-proofs} for the link to our repository.

\subsection{Foundational Cryptography Framework}

Our proof uses the Foundational Cryptography Framework \cite{petcher15:post}, a Coq library for reasoning about the security of cryptographic schemes. FCF provides a probabilistic programming language for describing cryptographic constructions, security definitions, and assumed-hard problems. Probabilistic programs are described using Gallina, the purely functional programming language of Coq, extended with a computational monad that adds uniform sampling on bit vectors. In this language, \texttt{\{0,1\}{\textasciicircum}n} describes uniform sampling on bit vectors of length n, and monadic arrows (e.g. \texttt{<-\$}) sequence the operations.
\texttt{Comp A} is the type of probabilistic computations that return values of type \texttt{A}, and a denotational semantics relates every \texttt{Comp A} with a probability distribution on \texttt{A}. 

FCF provides a theory of probability distributions, a program logic, and a library of reusable arguments and tactics that can be used to complete proofs. All of these items are based on theory derived from the semantics of probabilistic programs, and the result is a versatile proof framework with a small foundation. For example, FCF includes a reusable hybrid argument on lists that can be used in proofs like one described in this paper. FCF also provides a standard type and associated theory for an adversary that is allowed to interact with an oracle (called \texttt{OracleComp}).

\subsection{Cryptographic specification}
\label{cryptospec}

At a high level, any DRBG in the NIST specification takes as input an initial state \kv{} and a list of natural numbers, where each number is the number of blocks (bitvectors) requested for that call, e.g. $[1,3,2]$. It returns a list of lists of bitvectors with the correct number of generated blocks in each list, e.g. queried with $[1,3,2]$, our model of the DRBG would return $[[b_1], [b_2, b_3, b_4], [b_5, b_6]]$, where each $b_i$'s size is the output size of the PRF used in the DRBG. Because we model a nonadaptive adversary, there is only this one round of query-response.  

\paragraph{DRBG algorithm definitions.}
We begin by defining the abstract PRG used in our mechanized proof of pseudorandomness. Let $f$ be an arbitrary PRF that takes a key bitvector, an input list, and returns an output bitvector. $\eta$ is the output size of the PRF, $\mathit{Bvector} \, \eta$ is the type of bitvectors of size $\eta$ (also called ``blocks''), and $KV$ is a type synonym for the PRG's internal state, which is a tuple consisting of two blocks \kv.
\begin{lstlisting}
Variable $\eta$ : nat.
Hypothesis nonzero_eta : $\eta$ > 0.
Variable f : Bvector $\eta$ -> list bool -> Bvector $\eta$.
Definition KV : Set := Bvector $\eta$ * Bvector $\eta$.
\end{lstlisting}

The function \emph{Instantiate} instantiates the PRG's internal state \kv{} with ideal randomness. This idealized \instantiate{} function does not reflect the NIST specification; we discuss this in Section~\ref{thebad}.

\begin{lstlisting}
Definition Instantiate : Comp KV :=
  k <-$\$ $ {0,1}^$\eta$;
  v <-$\$ $ {0,1}^$\eta$;
  ret (k, v).
\end{lstlisting}

The function \emph{Gen\_loop} corresponds to the main loop of HMAC-DRBG's \emph{Generate} function. Given an internal state \kv{} and a number of requested blocks $n$, it returns a list of generated blocks, along with the last generated block, which becomes the new $v$ in the internal state.
\pagebreak
\begin{lstlisting}
Fixpoint Gen_loop (k : Bvector $\eta$) (v : Bvector $\eta$)
 $~~~~~~~~~$ (n : nat) : list (Bvector $\eta$) * Bvector $\eta$ :=
  match n with
  | O => (nil, v)
  | S n' => let v' := f k (to_list v) in
             let (bits, v'') := Gen_loop k v' n' in
             (v' :: bits, v'')
  end.
\end{lstlisting}

NIST specifies that at the end of a call to \emph{Generate}, \emph{Update} is immediately called to refresh the internal state. In our Coq definition, we have inlined \emph{Update} into \emph{Generate}:

\begin{lstlisting}
Definition Generate (state : KV) (n : nat) :
  $\qquad \qquad \qquad \qquad $ Comp (list (Bvector $\eta$) * KV) :=
  [k, v] <-2 state;
  [bits, v'] <-2 Gen_loop k v n;
  k' <- f k (to_list v' ++ zeroes);
  v'' <- f k' (to_list v');
  ret (bits, (k', v'')).
\end{lstlisting}
The NIST spec allows $n$ to be any number of bits (up to a specified maximum),
by discarding extra generated bits up to a multiple of the HMAC
block size.  Our security theorem measures $n$ in blocks, not bits;
it would be straightforward to extend the theorem to account
for discarding surplus bits.

\noindent \emph{Generate} and \emph{Instantiate} comprise our definition of the PRG.

The adversary $A$ is given the output of the PRG, which is a list of lists of blocks. It may perform a probabilistic computation on that input, and it returns a boolean guess attempting to distinguish the PRG's output from ideally random output. $A$ is nonadaptive; it is given the output all at once and cannot further query the PRG.

\begin{lstlisting}
Variable A : list (list (Bvector $\eta$)) -> Comp bool.
\end{lstlisting}

\noindent\textbf{Assumptions.}
We assume that there is a nonzero number of calls \emph{numCalls} to \emph{Generate}. We assume that on each call, the adversary requests the same nonzero number of blocks \emph{blocksPerCall}. This request list is called \emph{requestList}.

\begin{lstlisting}
Variable blocksPerCall : nat.
Variable numCalls : nat.
Definition requestList : list nat :=
  $\qquad \qquad\qquad $ replicate numCalls blocksPerCall.
Hypothesis H_numCalls : numCalls > 0.
Hypothesis H_blocksPerCall : blocksPerCall > 0.
\end{lstlisting}

Using the definitions of the PRG's functions and of the adversary, we define the real-world game $G_\mathrm{real}$ that simulates use of the PRG. First, the internal state of the PRG is instantiated. Then the \emph{Generate} function is iterated over the request list using the function \emph{oracleMap}, which tracks the PRG's internal state $(k, v)$ and updates it after each iteration. The resulting list of lists of blocks is passed to the adversary, which returns a boolean guess.

\begin{lstlisting}
Definition G_real : Comp bool :=
  [k, v] <-$\dol$2 Instantiate;
  [bits, _] <-$\dol$2 oracleMap Generate (k, v) requestList;
  A bits.
\end{lstlisting}

The ideal-world game $G_\mathrm{ideal}$ simply generates a list of lists of ideally random blocks, where each sublist is the same length as the corresponding one in the output generated by the PRG.  Here, \li|compMap| maps a
probabilistic function over a list.

\begin{lstlisting}
Definition G_ideal : Comp bool :=
  bits <-$\$ $ compMap Generate_rb requestList;
  A bits.
\end{lstlisting}

\subsection{Formal statement of theorem} \label{sec:statement}

Our main theorem states that the probability that an adversary can distinguish between
the real-world game $G_\mathrm{real}$ and
the ideal-world game $G_\mathrm{ideal}$ is bounded by a negligible quantity. 

\begin{lstlisting}
Theorem G_real_ideal_close :
  | Pr[G_real] - Pr[G_ideal] | $\le$ 
       numCalls * (PRF_Advantage_Max + Pr_collisions).
\end{lstlisting}

This quantity is the number of calls $numCalls$ to the PRG multiplied by a constant upper bound on the difference between adjacent hybrids, which is the sum of (1) an upper bound on the PRF advantage for HMAC, and (2) an upper bound on the probability of collisions in a list of length $blocksPerCall$ of $\mathit{Bvector} \s \eta$ that are each sampled uniformly at random.

For quantity (1), the PRF advantage of a function $f$ is informally defined to be the maximum probability that any p.p.t. adversary can distinguish $f$ from a PRF. Since a PRF's advantage is defined over all p.p.t. PRF adversaries, we construct a PRF adversary that takes an oracle as input (denoted by \emph{OracleComp} in its type), runs the PRG using that oracle, and returns the PRG adversary $A$'s guess. Technically, \emph{PRF\_Adversary} defines a family of adversaries parametrized by $i$, the index of the hybrid. There is actually a different PRF advantage for each hybrid $i$.

\begin{lstlisting}
Definition PRF_Adversary (i : nat) :
 $\hspace{0.8in}$ OracleComp Blist (Bvector eta) bool :=
  bits <--$\dol$ oracleCompMap_outer (Oi_oc' i) requestList;
  $\dol$ A bits.
\end{lstlisting}

The rationale behind this definition of \emph{PRF\_Adversary} will be given in the discussion of lemma \emph{Gi\_prf\_rf\_close} in Section~\ref{sec:fullproof}.

Now we can use the definition of \emph{PRF\_Adversary}, along with the PRF $f$, to define $\mathit{PRF\_Advantage\_Game} \s i$, which is the PRF advantage for hybrid $i$ (again discussed further in \emph{Gi\_prf\_rf\_close}).

\begin{lstlisting}
Definition PRF_Advantage_Game i : Rat := 
  PRF_Advantage RndK ({0,1}^$\eta$) f (PRF_Adversary i).
\end{lstlisting}

Finally, we define the upper bound on $f$'s PRF advantage to be
the maximum (taken over $i$) of the PRF advantages of hybrid $i$.

\begin{lstlisting}
Definition PRF_Advantage_Max := 
 PRF_Advantage_Game (argMax PRF_Advantage_Game numCalls).
\end{lstlisting}

If $f$ is instantiated with HMAC, then $PRF\_Advantage\_Max$ can be instantiated with the bound on the PRF advantage of HMAC derived in~\cite{beringer15:hmac}, modulo some technical details of instantiating the $PRF\_Adversary$ between the two definitions. 

Quantity (2) in the theorem statement is the upper bound on the probability of collisions in a list of length \emph{blocksPerCall} of $\mathit{Bvector} \s \eta$ that are each sampled uniformly at random.

\begin{lstlisting}
Definition Pr_collisions := (1 + blocksPerCall)^2 / 2^$\eta$.
\end{lstlisting}
This arises from bounding the probability of the ``bad event'' in intermediate hybrids, the probability of collisions in the inputs to a random function as used in that intermediate hybrid.

\subsection{Proof that HMAC-DRBG is pseudorandom} \label{sec:fullproof}

We formulate the real-world and ideal-world outputs as games involving interactions with the adversary $A$, and use the code-based game-playing technique \cite{bellare2006security}
to bound the difference in probability that the adversary returns \emph{True}. We call this ``bounding the distance between the two games.'' We use ``equivalent'' to mean that two games correspond to the same probability distribution.  We abbreviate ``random function'' as ``RF'' and ``ideal randomness'' as ``RB,'' for ``random bits.'' The output block-size of the PRF is denoted by $\eta$.

Now we outline the proof of the main result $\emph{G\_real\_ideal\_close}$. Here is the
proof script:

\begin{lstlisting}
Theorem G_real_ideal_close :
  | Pr[G_real] - Pr[G_ideal] | $\le$ 
       numCalls * (PRF_Advantage_Max + Pr_collisions).
Proof.
  rewrite Generate_move_v_update.
  rewrite G_real_is_first_hybrid.
  rewrite G_ideal_is_last_hybrid.
  specialize (distance_le_prod_f (fun i => Pr[Gi_prg i]) Gi_adjacent_hybrids_close numCalls).
  intuition.
Qed.
\end{lstlisting}

\kye{I suggest we rename $G1\_prg$ to $G\_real\_v$, and rename $Gi\_prg$ to $hybrid$, in both the paper and the Coq file. I don't have time to do it but I think it would greatly help readability.}

The proof is quite short, since it consists of applying five lemmas: rewriting $G_\mathrm{real}$ into the nicer form defined by $\emph{G1\_prg}$, then rewriting the ideal-world and real-world games as hybrids, then applying the theorem \emph{Gi\_adjacent\_hybrids\_close} that bounds the difference betweeen any two adjacent hybrids by \emph{Gi\_Gi\_plus\_1\_bound}. Most of the work is done in \emph{Gi\_adjacent\_hybrids\_close}.

Our proof is structured according to this outline,
and we will present the lemmas in this order:

      \begin{longenum}
\item[] \hspace{-2em}$*$~\li|G_real_ideal_close|
      \item \li|Generate_move_v_update|
      \item \li|G_real_is_first_hybrid|
      \item \li|G_ideal_is_last_hybrid|

      \item \li|Gi_adjacent_hybrids_close|
        \begin{longenum}
        \item \li|Gi_prog_equiv_prf_oracle|
        \item \li|Gi_replace_prf_with_rf_oracle|
        \item \li|Gi_replace_rf_with_rb_oracle|
          \begin{longenum}
          \item \li|Gi_prog_equiv_rb_oracle|
          \item \li|Gi_rb_rf_identical_until_bad|
            \begin{longenum}
            \item \li|fcf_fundamental_lemma|
            \item \li|Gi_rb_rf_return_bad_same|
            \item \li|Gi_rb_rf_no_bad_same|
            \end{longenum}
          \item \li|Gi_Pr_bad_event_collisions|
          \end{longenum}
        \end{longenum} 

      \item \li|hybrid_argument| 
      \end{longenum}

We start by proving a lemma to rewrite $G_\mathrm{real}$ into a nicer form defined by \emph{G1\_prg}.

\begin{lstlisting}
Lemma Generate_move_v_update :
  Pr[G_real] == Pr[G1_prg].
\end{lstlisting}

The way NIST's \emph{Generate} function updates $v$ makes it difficult to write neat hybrids. Specifically, the PRF is re-keyed on the second line of \emph{Generate}; then, on the next line, in order to update $v$, \emph{Generate} queries the \emph{new} PRF. Therefore, if we were to replace $f_k$ with a PRF oracle, it would have to span the last line of \emph{Generate} and the first two lines of the next \emph{Generate} call, but not include the second call's $v$-update. This would be messy to reason about. We solve this problem by moving each $v$-update to the beginning of the next call of \emph{Generate}, then prove that the outputs are still identically distributed. Then, after the PRF is re-keyed, we do \emph{not} further query it in this call of \emph{Generate\_v}. 

In the intermediate game \emph{G1\_prg}, we move each terminal $v$-update to the beginning of the next call to \emph{Generate} by splitting the \emph{Generate} function into \emph{Generate\_noV} and \emph{Generate\_v}. 

The function \emph{Generate\_noV} is used for the first call. It differs from \emph{Generate} because it does not start by updating the $v$. 

\begin{lstlisting}
Definition Generate_noV (state : KV) (n : nat) :
  $\qquad \qquad \qquad $   Comp (list (Bvector $\eta$) * KV) :=
  [k, v] <-2 state;
  [bits, v'] <-2 Gen_loop k v n;
  k' <- f k (to_list v' ++ zeroes);
  ret (bits, (k', v')).
\end{lstlisting}

The second version, \emph{Generate\_v}, starts by updating $v$ and does not update $v$ again. Then, as in $G_\mathrm{real}$, the resulting list of lists of blocks is passed to the adversary, which returns a boolean guess. 

\begin{lstlisting}
Definition Generate_v (state : KV) (n : nat) :
  $\qquad \qquad $   Comp (list (Bvector $\eta$) * KV) :=
  [k, v] <-2 state;
  v' <- f k (to_list v);        (* new *)
  [bits, v''] <-2 Gen_loop k v' n;
  k' <- f k (to_list v'' ++ zeroes);
  ret (bits, (k', v'')).
\end{lstlisting}

This is the revised real-world game that uses the two split-up \emph{Generate} versions defined above. 

\begin{lstlisting}
Definition G1_prg : Comp bool :=
  [k, v] <-$\$ $2 Instantiate;
  [head_bits, state'] <-$\$ $2 Generate_noV (k, v)
                                blocksPerCall;
  [tail_bits, _] <-$\$ $2 oracleMap Generate_v state' 
                                (tail requestList);
  A (head_bits :: tail_bits).
\end{lstlisting}

To prove the real-world game $G_\mathrm{real}$ equivalent to the new version \emph{G1\_prg}, we prove that the pseudorandom output produced by this sequence of calls
$$[\mathit{Generate}, \mathit{Generate}, \ldots]$$
is identically distributed to the pseudorandom output produced by this new sequence of calls
$$[\mathit{Generate\_noV}, \mathit{Generate\_v}, \mathit{Generate\_v}, \ldots].$$

Using $\mathit{G1\_prg}$, to set up the hybrid argument, we rewrite the two ``endpoint'' games as hybrids. We first prove (by a straightforward induction) that the real-world game is equal to the first hybrid. 

\begin{lstlisting}
Lemma G_real_is_first_hybrid :
  Pr[G1_prg] == Pr[Gi_prg O].
\end{lstlisting}

Next: the ideal-world game is equal to the last hybrid,
also proved by a straightforward induction.

\begin{lstlisting}
Lemma G_ideal_is_last_hybrid :
  Pr[G_ideal] == Pr[Gi_prg numCalls].
\end{lstlisting}


\noindent\textbf{Bounding the difference between adjacent hybrids.}
As described in Section~\ref{sec:statement}, the difference between any two adjacent hybrids is defined to be the sum of the maximum PRF advantage over the PRF advantage for each hybrid, plus an upper bound on the probability of collisions in a list of length \emph{blocksPerCall} of uniformly randomly sampled blocks of length $\eta$ each.

\begin{lstlisting}
Theorem Gi_adjacent_hybrids_close : forall (n : nat),
  | Pr[Gi_prg n] - Pr[Gi_prg (1+n)] |
   $\le$ PRF_Advantage_Max + Pr_collisions.
\end{lstlisting}

The game $\mathit{Gi\_prg} \s i$ defines the $i$th hybrid game, which replaces the output of the PRF with ideal randomness for any call before the $i$th call.

\begin{lstlisting}
Definition Gi_prg (i : nat) : Comp bool :=
  [k, v] <-$\$ $2 Instantiate;
  [bits, _] <-$\$ $2 oracleMap (choose_Generate i) 
                       (O, (k, v)) requestList;
  A bits.
\end{lstlisting}
To do this, $\mathit{Gi\_prg} \s i$ uses an oracle $\mathit{choose\_Generate} \s i$ that tracks the number of calls \emph{callsSoFar} (in addition to the PRG's internal state \kv{}) and returns the appropriate version of the PRG for that call. $\mathit{choose\_Generate} \s i$ returns \emph{Generate\_rb\_intermediate} if the call index is less than $i$, and \emph{Generate\_noV} or \emph{Generate\_v} as appropriate otherwise.

\begin{lstlisting}
Definition choose_Generate (i : nat) (sn: nat * KV)
  (n: nat) : Comp (list (Bvector $\eta$) * (nat * KV)) :=
  [callsSoFar, state] <-2 sn;
  let Gen := if lt_dec callsSoFar i
             then Generate_rb_intermediate
             else if beq_nat callsSoFar O then Generate_noV
             else Generate_v in
  [bits, state'] <-$\$ $2 Gen state n;
  ret (bits, (1+callsSoFar, state')).
\end{lstlisting}

The two quantities that compose the bound between any two adjacent hybrids $\mathit{Gi\_prg} \s i$ and $\mathit{Gi\_prg} \s (i+1)$ are defined as follows (see also Section~\ref{sec:statement}).

\begin{lstlisting}
Definition PRF_Advantage_Max := 
 PRF_Advantage_Game (argMax PRF_Advantage_Game numCalls).
Definition Pr_collisions := (1 + blocksPerCall)^2 / 2^$\eta$.
\end{lstlisting}

To bound the distance between adjacent hybrids $Gi\_prg \s i$ and $Gi\_prg \s (i+1)$, we derive the two quantities in the bound in two steps. 
\emph{PRF\_Advantage\_Max} is derived in lemma
\emph{Gi\_prf\_rf\_close}, and 
\emph{Pr\_collisions} in lemma \emph{Gi\_rf\_rb\_close}.


Before we can derive the bound, we first need to transform our normal hybrid definition $\mathit{Gi\_prg} \s i$ into the equivalent oracle-using definition
$\mathit{Gi\_prf} \s i$.

\begin{lstlisting}
Lemma Gi_prog_equiv_prf_oracle : forall (i : nat),
  Pr[Gi_prg i] == Pr[Gi_prf i].
\end{lstlisting}

The only difference between the two definitions is that $\mathit{Gi\_prf} \s i$ is written in the form of an oracle computation that uses a provided oracle on the $i$th call. In the new game, the oracle we provide is the PRF, defined in the code as $\mathit{f\_oracle} \s f \s k$. The old game $Gi\_prg \s i$ already uses the PRF on the $i$th call by definition.

\begin{lstlisting}
Definition Gi_prf (i : nat) : Comp bool :=
  k <-$\dol$ RndK;
  [b, _] <-$\dol$2 PRF_Adversary i (f_oracle f k) tt;
  ret b.
\end{lstlisting}

The computation \emph{PRF\_Adversary} is a constructed adversary against the PRF that rewrites our PRG execution, previously defined using \emph{oracleMap}, in terms of \emph{oracleCompMap}, which is a version of \emph{oracleMap} that replaces the PRF on the $i$th call of hybrid $i$ with a provided oracle. The provided oracle is the PRF, $\mathit{f\_oracle} f k$.

\begin{lstlisting}
Definition PRF_Adversary (i : nat) 
  : OracleComp Blist (Bvector eta) bool :=
  bits <--$\dol$ oracleCompMap_outer (Oi_oc' i) requestList;
  $\dol$ A bits.
\end{lstlisting}

The technical lemma \emph{Gi\_normal\_prf\_eq}
allows us to provide different oracles to the new oracle-taking form of the hybrid to use on its $i$th call, which we use to define the two intermediate hybrids after $\mathit{Gi\_prg} \s i$.


We derive the first quantity used in the bound,\\ 
\emph{PRF\_Advantage\_Max}.

\begin{lstlisting}
Lemma Gi_replace_prf_with_rf_oracle : forall (i : nat),
    i $\le$ numCalls ->
| Pr[Gi_prf i] - Pr[Gi_rf i] | $\le$ PRF_Advantage_Max.
\end{lstlisting}

Recall that we have just rewritten the $i$th hybrid $\mathit{Gi\_prg} \s i$ as \emph{Gi\_prf}, defined above, which uses the PRF oracle $\mathit{f\_oracle} \s f \s k$. In \emph{Gi\_rf}, we replace the PRF oracle with a random function oracle \emph{randomFunc}.

\begin{lstlisting}
Definition Gi_rf (i : nat) : Comp bool :=
  [b, _] <-$\dol$2 PRF_Adversary i (randomFunc ({0,1}^eta)) nil;
  ret b.
\end{lstlisting}

A random function is simply a function that, queried with any new input, returns an ideally random $\mathit{Bvector} \s \eta$ and caches the result. If it is queried with a previous input, it returns the previously-sampled bit-vector.
By the definition of PRF advantage, for any adversary against the PRF, the probability that the adversary can distinguish between the PRF and a random function is \emph{PRF\_Advan\-tage\_\linebreak[2]Game}$(i)$, which is upper-bounded by the maximum taken over $i$ (over all hybrids).

Having bounded the distance between $\mathit{Gi\_prg} \s i$ and the intermediate hybrid $\mathit{Gi\_rf} \s i$, we proceed to derive the second quantity in the bound: the distance between $\mathit{Gi\_rf} \s i$ and the next normal hybrid $\mathit{Gi\_prg} \s (i+1)$ is \emph{Pr\_collisions}.

\begin{lstlisting}
Lemma Gi_replace_rf_with_rb_oracle : forall (i : nat), 
  | Pr[Gi_rf i] - Pr[Gi_prg (1+i)] | $\le$ Pr_collisions. 
\end{lstlisting}

Similarly to lemma \emph{Gi\_prf\_rf\_close}, we must first rewrite the leftmost hybrid $\mathit{Gi\_prg} \s (i+1)$ in the form of the oracle-using hybrid \emph{Gi\_rb} that uses an oracle \emph{rb\_oracle} on its $i$th call. 

\begin{lstlisting}
Lemma Gi_prog_equiv_rb_oracle : forall (i : nat),
    Pr[Gi_prg (1+i)] == Pr[Gi_rb i].
\end{lstlisting}

\begin{lstlisting}
Definition Gi_rb (i : nat) : Comp bool :=
  [b, state] <-$\dol$2 PRF_Adversary i rb_oracle nil;
  let rbInputs := fst (split state) in  ret b.
\end{lstlisting}

This new oracle \emph{rb\_oracle} returns an ideally random $\mathit{Bvector} \s \eta$ for every query, not caching any of its queried inputs. 

\begin{lstlisting}
 Definition rb_oracle (state : list (Blist * Bvector $\eta$))
 $\hspace{2in}$  (input : Blist) :=
  output <-$\dol$ ({0,1}^eta);
  ret (output, (input, output) :: state).
\end{lstlisting}

We prove this rewrite, \emph{Gi\_normal\_rb\_eq}, via another induction.
We use this rewrite to bound the difference between the two oracle-using hybrids $\mathit{Gi\_rf} \s i$ and $Gi\_rb \s i$. The two games differ in only one aspect: on the $i$th call, $\mathit{Gi\_rf} \s i$ replaces the PRF with a random function oracle, whereas $\mathit{Gi\_rb} \s i$ replaces the PRF with an ideally random oracle. These two oracles themselves can be distinguished by an adversary in only one way: if the random function is queried with an input that has been previously queried, then it becomes deterministic and returns the same output as before, whereas the ideally random oracle will still return ideally random output.

Therefore, the distance between $\mathit{Gi\_rf} \s i$ and $\mathit{Gi\_rb} \s i$ ought to be bounded by the probability that there are duplicates in the inputs to the random function, which we refer to as the probability of the \emph{bad event} occurring. The probability that the adversary can distinguish between $\mathit{Gi\_rf} \s i$ and $\mathit{Gi\_rb} \s i$ should then be upper-bounded by the probability of the bad event.

We formalize this intuitive argument by applying the \emph{fundamental lemma of game-playing}~\cite{bellare2006security}, which states that the difference between two games that are ``identical until bad'' is upper-bounded by the probability of the ``bad event'' happening. Two games are ``identical until bad'' if they are syntactically identical except for statements that follow the setting of a flag \emph{bad} to \emph{true}.

We state the bound on the difference between the game in a form following that of the fundamental lemma:

\begin{lstlisting}
Lemma Gi_rb_rf_identical_until_bad : forall (i : nat),
  | Pr[x <-$\dol$ Gi_rf_dups_bad i; ret fst x]
       - Pr[x <-$\dol$ Gi_rb_bad i; ret fst x] |
  $\le$ Pr[x <-$\dol$ Gi_rb_bad i; ret snd x].
\end{lstlisting}

The new games \emph{Gi\_rf\_dups\_bad} and \emph{Gi\_rb\_bad} are just versions of the original games rewritten to return a tuple, where the first element is the usual output of $\mathit{Gi\_rf}$ or $\mathit{Gi\_rb}$ and the second event is a boolean that indicates whether the bad event has occurred. To apply the fundamental lemma to prove \emph{Gi\_rb\_rf\_identical\_until\_bad}, we first prove that the games \emph{Gi\_rf\_dups\_bad} and \emph{Gi\_rb\_bad} are indeed identical until bad.
First we prove that the bad event has the same probability of happening in each game:

\begin{lstlisting}
Lemma Gi_rb_rf_return_bad_same : forall (i : nat),
    Pr  [x <-$\dol$ Gi_rb_bad i; ret snd x ] ==
    Pr  [x <-$\dol$ Gi_rf_dups_bad i; ret snd x ].
\end{lstlisting}
This is clearly true (but don't take our word for it: ask Coq!). Second, we prove that if the bad event does not happen, the two games have identically distributed outputs (evalDist takes a Comp and produces the corresponding probability mass function).

\begin{lstlisting}
Lemma Gi_rb_rf_no_bad_same : forall (i: nat) (a: bool), 
     evalDist (Gi_rb_bad i) (a, false)  ==
     evalDist (Gi_rf_dups_bad i) (a, false).
\end{lstlisting}
This, too, is clearly true. Thus, we can apply the fundamental lemma to derive the upper bound:\\
\indent \lstinline{Pr[x <-$\dol$ Gi_rb_bad i; ret snd x]}.

That concludes the identical-until-bad argument, which bounded the difference between the two games by the probability of the bad event. We finish the proof of the main result by calculating an upper bound on the probability of the bad event, which is, again, a collision in the inputs to the RF. 

\begin{lstlisting}
Theorem Gi_Pr_bad_event_collisions: 
   forall (i:nat) (v: Bvector $\eta$),
   Pr  [x <-$\dol$ Gi_rb_bad i; ret snd x ] $\le$ Pr_collisions.
\end{lstlisting}


On the $i$th call to the PRG, the provided oracle (here the random function) is queried once with a constant $v$ and then \emph{blocksPerCall} times after that, where each RF input is actually a previous output from the RF due to the design of HMAC-DRBG's \emph{Gen\_loop}. If an RF output is not a duplicate, it is ideally random.  (To derive a more general bound, \emph{blocksPerCall} can be generalized to the $i$th element in the request list. The rest of the proof is unchanged.)

Therefore, the probability of collisions in the inputs to the RF simplifies to the probability of collisions in a list of length $\mathit{blocksPerCall} + 1$ of elements $\mathit{Bvector} \s \eta$ that are each sampled uniformly at random. By the union bound, this is less than ${(1+\mathit{blocksPerCall})^2}/{2^\eta}$.

We have bounded the distance between adjacent hybrids with \emph{Gi\_adjacent\_hybrids\_close}, so we conclude the whole proof by applying the hybrid argument (which is essentially the triangle inequality, repeatedly applied) with that lemma to bound the distance between the first and last hybrid.

\begin{lstlisting}
Theorem hybrid_argument :
       forall (G : nat -> Rat) (bound : Rat),
       (forall i : nat, | G i - G (1+i) | $\le$ bound) ->
       forall n : nat, | G 0 - G n | $\le$ n * bound
\end{lstlisting}

This generic theorem states that if the distance between adjacent games $G \s i$ and $G \s (i+1)$ is bounded by the rational number $bound$, the difference between the first hybrid $G \s 0$ and the last hybrid $G \s n$ is bounded by $n \cdot \mathit{bound}$.


\subsection{Coq-specific proofs}

Working in Coq requires doing many proofs of program and game equivalence. These program equivalence proofs are generally elided, by choice or by accident, in paper proofs. Two such proofs are \li|Gi_prog_equiv_prf_oracle| and
\li|Gi_prog_equiv_rb_oracle|. Because using oracles changes the types of all computations that use that computation, the types of \emph{Gen\_loop}, \emph{Generate\_v}, \emph{choose\_Generate}, and \emph{oracleMap} all change from \emph{Comp} to \emph{OracleComp}, and we write new functions with the new type and prove them each equivalent. Proving the two lemmas requires doing complex inductions over both the list of blocks provided by the adversary and over each number of blocks in a call. 

Semiformal crypto proofs (e.g., \cite{hirose2008security})
often leave the burden  of checking for semantic equivalence on the reader. 
An unintentional error in this part of a crypto proof is unlikely,
but an intentional error is possible, and it may be subtle enough that a standards body would not notice it. 

\subsection{Efficiency of adversary}

We do not formally prove the efficiency of the constructed adversaries in the proof. It is necessary to inspect the family of constructed adversaries (that is, Definition PRF\_Adversary in \S\ref{maintheorem})
in order to establish that it is probabilistic polynomial time;
which clearly it is
(assuming the adversary against the PRG is p.p.t.).
While this omission increases the amount of code that must be inspected, it also allows the result to be interpreted in any model of computation and cost.

\section{Concrete security estimate}
\label{concrete}
The distance between $G_\mathrm{real}$ and $G_\mathrm{ideal}$ is
\[\mathit{numCalls}\cdot(\mathit{PRF\_Advantage\_Max} + (1+\mathit{blocksPerCall})^2/2^\eta).\]  What does
that mean concretely about an adversary's ability to predict
the output of HMAC-DRBG?  Let us assume the adversary runs in time
and space at most $2^t$.

Beringer \emph{et al.} proved that the PRF advantage for HMAC is the
sum of the advantages associated with these three assumptions:
(1) SHA-256 is weakly collision-resistant, (2) the SHA-256 compression function
is a PRF, and (3) the dual family of the SHA256 compression function is a PRF against a (very) restricted related-key attack. 

(1) If the best method for finding collisions is a birthday attack,
a collision will be found with probability at most $2^{2t-256}$.
\ (2) If the best method for distinguishing SHA-256's compression from
a random function is an exhaustive search on keys,
the probability of success is at most $2^{t-256}$.
\ (3) If, in addition, we consider that the constructed adversary in the proof gains little or no additional advantage from the related key oracle, then the probability of success is at most $2^{t - 256}$ \cite{bellare2006new}.

From these assumptions, we can conclude that the
adversary's advantage in distinguishing HMAC-SHA256 from a random function is
at most $2^{2t - 256} + 2^{t - 255}$. For $t = 78$,
the advantage is at most $2^{-100} + 2^{-177}$.

Assuming there are $2^{48}$ calls to the DRBG \kye{NIST specifies a maximum number of calls to the DRBG and maximum number of blocks per call. See the section on the concrete bound in my thesis, or look at the NIST spec. You may want to use those numbers. You might also cite Hirose's concrete bound.}
with block length $\eta=128$ and $\mathit{blocksPerCall}=10$,
then the distance is about $2^{48}\cdot(2^{-100}+2^{-177}+(1+10)^2/2^{128}$,
or about $2^{-52}$.  That is, the adversary with a thousand million terabytes ($< 2^{78}$ bits), and a million 1-gigahertz processors running for a year
($< 2^{78}$ cycles) has a $2^{-52}$ chance of guessing better than random.

It also means that block length $128$ is better than $64$,
but increases beyond that are not particularly useful.

Warning: These calculations are based on unproved assumptions (1)(2)(3)
about SHA-256 and the best ways to crack it.  The assumptions may not be
true, and there may be better ways to crack SHA-256.

\section{mbedTLS implementation is correct}
\label{sec:mbedtlsVerification}
\subsection{Functional specification program} 
We verify the mbedTLS implementation of HMAC-DRBG with respect to a
functional program in Coq, with functions
\textbf{mbedtls\_\linebreak[2]generate}, \ldots,
\textbf{mbedtls\_\linebreak[2]reseed}.  Where the Coq functions
\textbf{Generate}, etc.  described in Section~\ref{cryptospec} operate
over $\mathit{Bvector} \, \eta$, these functions operate over finite
lists of sequences of integers (type \lstinline{list Z} in Coq), and
over abstract DRBG states of type


\begin{lstlisting}
Inductive hmac256drbgabs :=
  HMAC256DRBGabs: forall (key V: list Z)
       (reseedCounter entropyLen: Z)
       (predictionResistance: bool)
       (reseedInterval: Z), hmac256drbgabs.
\end{lstlisting} 
The functions' definitions are structured in accordance with NIST
800-90A, specialising generic DRBG functions to the functional
program for HMAC-SHA256 from Beringer \emph{et al.}~\cite{beringer15:hmac}.
The detailed definitions are available in our source code.

There is a straightforward relation $\sim$ between states
$(k,v)$ from Section~\ref{sec:mechanized}
and states \li|(HMAC256DRBGabs $k$ $v$ 0 0 false $2^{48}$)|
of our \textbf{mbedtls} functional specification.

Based on $\sim$, we formulate an equivalence theorem between
the two specifications of the \emph{Generate} function,
under certain conditions.  For example,
mbedTLS measures \texttt{out\_len} in bytes,  up to 1024,
while our crypto spec measures $n$ in 256-bit blocks;
the reseed counter must be less than the reseed interval;
the \emph{predictionResistance} flag must be false, and so on.

\begin{theorem}The functions
\label{spec-equiv}
\textbf{mbedtls\_generate}, etc. are equivalent
to \textbf{Generate}, etc.,
for $\sim$ similar states.
\end{theorem}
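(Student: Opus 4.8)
The plan is to prove the two functional specifications equivalent by structural induction on the shape of a \generate{} computation, relating the bitvector-based crypto spec of Section~\ref{cryptospec} to the integer-list-based \textbf{mbedtls} spec through the state relation $\sim$. The essential difficulty is that three representational gaps must be reconciled at once: (i) the crypto spec stores $k$ and $v$ as elements of $\mathit{Bvector}\,\eta$, whereas \textbf{mbedtls} stores them as \texttt{list Z}; (ii) the crypto spec measures requested output in whole $\eta$-bit blocks, whereas \textbf{mbedtls} measures \texttt{out\_len} in bytes and truncates the final block; and (iii) the \textbf{mbedtls} state \texttt{hmac256drbgabs} carries extra fields (reseed counter, entropy length, prediction-resistance flag, reseed interval) that the crypto spec's \kv{} omits.

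First I would fix the byte/bitvector encoding, defining (or reusing) the bijection between $\mathit{Bvector}\,256$ and the length-$32$ byte lists on which \textbf{mbedtls} operates, and lift $\sim$ to a pointwise correspondence on $k$ and $v$. Then I would prove the key congruence lemma: the abstract PRF $f$ of the crypto spec and the HMAC-SHA256 functional program of Beringer \emph{et al.}~\cite{beringer15:hmac} agree under this encoding, i.e., whenever $k,v$ correspond to $k',v'$ then $f_k(v)$ corresponds to the \textbf{mbedtls} HMAC applied to $k',v'$. This reduces the entire equivalence to a statement about control flow, independent of what HMAC actually computes.

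With the PRF congruence in hand, the core step is an induction showing \emph{Gen\_loop} equivalent to the \textbf{mbedtls} generation loop. The induction is on the number of blocks/bytes remaining, and the invariant must carry both the accumulated output (a bitvector list on one side, a byte list on the other, related by concatenation-then-encoding) and the current working $v$. Here I would discharge the granularity side condition: under the hypothesis relating \texttt{out\_len} to $n$, the loop iterating $f_k$ exactly $n$ times produces the same bytes as the byte-counting \textbf{mbedtls} loop after its \texttt{firstn}-style truncation. The subsequent re-keying/\update{} step (the two extra $f$ applications refreshing \kv{}) then follows directly from the same congruence lemma, once one checks that the zero-padding constant (\texttt{zeroes}/\texttt{0x00}) agrees across the two specs; this also settles the \update{} case of the "etc."

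Finally I would assemble the pieces into the \generate{} equivalence, discharging the remaining hypotheses: the \emph{predictionResistance} flag being \emph{false} rules out the fresh-entropy branch, and the reseed counter lying strictly below the reseed interval rules out the mandatory-\reseed{} branch, so that the \textbf{mbedtls} control flow collapses to exactly the loop-plus-update skeleton modeled by the crypto spec, while the unused state fields are preserved by $\sim$ and cause no mismatch (the idealized \instantiate{} of Section~\ref{cryptospec} is outside this equivalence, which is why its security is left unproved). I expect the main obstacle to be the induction of step (iii): keeping the two output accumulators in lockstep when the last block is cut mid-block, and threading the encoding bijection simultaneously through the list-of-blocks concatenation and the byte-level truncation, is precisely the low-level bookkeeping that paper proofs elide and that Coq forces one to get exactly right.
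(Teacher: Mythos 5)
Your plan is correct and follows essentially the same route as the paper, whose entire stated proof is ``By induction on iterations of the \emph{Generate} loop'': your Gen\_loop-vs-mbedTLS-loop induction is exactly that induction, and your encoding bijection, PRF congruence with the HMAC-SHA256 functional program, and discharge of the side conditions (prediction resistance false, reseed counter below the interval, byte/block granularity) are the same ingredients the paper folds into the relation $\sim$ and its ``under certain conditions'' hypotheses. The only difference is level of detail---the paper compresses all of this bookkeeping into one sentence, while you spell it out.
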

\begin{proof}By induction on iterations of the \emph{Generate} loop.\end{proof}
\vspace{-1ex}

Proving the C program correct w.r.t. the functional program,
then proving Theorem~\ref{spec-equiv},
is simpler than directly proving the C program implements
the crypto spec.  There are other advantages too:
\textbf{mbedtls\_generate} is directly
executable, though about a million times slower than the C
program.  Successfully validating our functional program against all 240
noReseed-noPredictionResistance test vectors from
NIST~\cite{NISTDRBGtestvectors} takes about 30 mins.

\subsection{Verification of C code}
\newcommand{\PROP}{\ensuremath{\mathsf{PROP}}}
\newcommand{\LOCAL}{\ensuremath{\mathsf{LOCAL}}}
\newcommand{\SEP}{\ensuremath{\mathsf{SEP}}} The Verified Software
Toolchain's program logic, called \emph{Verifiable C}, is a
higher-order impredicative separation logic~\cite{appel14:plcc}.
That is, it is a Hoare
logic that can handle advanced features such as pointer data
structures, function-pointers, quantification over predicates,
recursive predicates, and so on.  Its judgments take the shape of a
Hoare triple $\{ \mathit{pre}(z) \} c \{ \mathit{post}(z) \}$, where
assertions $\mathit{pre}(.)$ and $\mathit{post}(.)$ are predicates on
states. Variable $z$ is implicitly universally quantified and is used
to express functional correctness properties in which postconditions
refer to (some aspects of) preconditions. Our proof automation
operates over assertions of the form
$$\PROP\; P\; \LOCAL\; L\; \SEP\;R$$
where $P$ is a purely propositional term, $L$ describes the content
of local and global variables, and $R$ describes the content of the
heap, as a list of formulae that are implicitly interpreted to refer
to non-overlapping memory regions.

Figure~\ref{fig:randomSpec} shows a specification of
\mbox{\li|mbedtls_hmac_drbg_random|.} It means, suppose the
function is called with argument \lstinline{p_rng} holding a concrete
representation of some abstract DRBG state $I$, and \lstinline{output}
pointing to a (separate) memory region of length $0\le n \le 1024$,
specified in \lstinline{out_len}. Suppose further that requesting $n$ bytes from $I$ and entropy stream $s$ succeeds according to the
\emph{functional} specification program \lstinline{mbedtls_generate}, yielding
\lstinline{(bytes,ss, F)}. Then, executing the body of the C function
is \emph{safe}: it will not perform memory loads or stores outside the
region specified by the precondition's $\SEP$ clause, the function's
stack frame, and potentially locally allocated memory; it will not
experience division-by-zero or other runtime errors. If terminating,
the function will yield return value 0, and a state in which
\lstinline{ctx} indeed holds abstract state $F$, \lstinline{output}
indeed holds \lstinline{bytes}, and the entropy stream is updated as
specified.

\begin{figure}
\begin{lstlisting}
Definition drbg_random_abs_spec :=
DECLARE _mbedtls_hmac_drbg_random
 WITH output: val, n: Z, ctx: val, 
      I, F: hmac256drbgabs, kv: val, 
      s, ss: ENTROPY.stream, bytes: list Z
 PRE [_p_rng OF tptr tvoid, _output OF tptr tuchar,
      _out_len OF tuint ]
  PROP (0 <= n <= 1024;
        mbedtls_generate s I n = Some (bytes, ss, F)
  LOCAL (temp _p_rng ctx; temp _output output;
         temp _out_len (Vint (Int.repr n));
         gvar sha._K256 kv)
  SEP (data_at_ Tsh (tarray tuchar n) output;
       AREP kv I ctx; Stream s)
 POST [ tint ]   
  PROP () 
  LOCAL (temp ret_temp (Vint Int.zero))
  SEP (data_at Tsh (tarray tuchar n)
                   (map Vint (map Int.repr bytes)) output;
       AREP kv F ctx; Stream ss).
\end{lstlisting}
\caption{\label{fig:randomSpec}
  \mbox{VST specification of  $\mathtt{mbedtls\_hmac\_drbg\_random}$}}
\end{figure}

The use of representation predicate \lstinline{AREP} to relate an
abstract DRBG state to a concrete C data structure laid out in memory
is typical in separation logic, and emphasizes that client
code should only operate on DRBG contexts using the interface
functions. Abstract states $I$ and $F$ are elements of the type
\li|hmac256drbgabs| defined above.   The definition of \lstinline{AREP}
includes numeric invariants asserting that,
e.g.,~\lstinline{reseedCounter} is bounded between 0 and
$\mathtt{Int.max\_signed}$, but is only unfolded in the proofs of the
function bodies.

The code of $\mathtt{mbedtls\_hmac\_drbg\_random}$\; invokes the more
general function $\mathtt{mbedtls\_hmac\_drbg\_random\_with\_add}$, our
specification of which indeed supports the supply of additional input,
respects the prediction resistance flag and reseeds the DRBG if
necessary. The specification shown in Figure~\ref{fig:randomSpec}
restricts input arguments so that certain error cases are avoided;
again, our formalization also includes more general specifications.

Our C implementation is based on mbedTLS \texttt{hmac\_drbg.c},
specialized to HMAC-SHA256 by instantiating C preprocessor macros.
We make minor local edits to satisfy Verifiable C's
syntactic restrictions: isolating memory loads/stores
into separate assignment statements.
The program uses a \texttt{volatile} keyword to
prevent the C compiler from deleting the
clearing of an array after its last use;
reasoning about \texttt{volatile} is not supported
in Verifiable C, so we had to remove it.
Calls to HMAC are
rerouted to the OpenSSL implementation \cite{beringer15:hmac}.

To formally prove that the C code satisfies the specification, we use
the VST-Floyd proof automation system: a collection of Coq tactics,
lemmas, and auxiliary definitions that perform forward symbolic
execution with substantial interactive help from the user, who
supplies function specifications, loop invariants, and proofs of
application-specific transformations using Coq's built-in proof
commands. Building on the verification of $\mathtt{HMAC\_SHA256}$, we
have specified and proved all core functions in \texttt{hmac\_drbg.c},
including the flag update functions.




\section{Lessons learned}
\label{lessons}
\label{goodbadugly}
Three NIST design decisions made the proof harder or easier.\\

\noindent\textbf{The good.} NIST re-keys the PRF with a length-extended input. This simplifies the component of the proof that bounds the probability of collisions in inputs to the random function. Because HMAC can take inputs of any length, and all previous inputs for HMAC with that key were of fixed length (since HMAC has a fixed output), it is easy to prove that the length-extended input cannot be a duplicate of any previous input to the random function.\\

\noindent\textbf{The bad.}
\label{thebad}
In \emph{Instantiate}, NIST begins the initialization of the state \kv{} as follows:\\
\begin{game}
\item[] $k \leftarrow \mathit{HMAC}_{c_1}(c_2 || \mathtt{0x00} || \mathit{entropy\_input})$ 
\item[] $v \leftarrow \mathit{HMAC}_{k}(c_2)$\\
\end{game} 
If we assume that \textit{entropy\_input} is ideally random,
then it would be better to incorporate it into the first
argument ($c_1$) of HMAC, rather than the second,
because the latter forces us to assume or prove that HMAC
is an entropy extractor.

If \textit{entropy\_input} is \emph{not} ideally random (as would likely be the case in a real system),
then we will indeed need an entropy extractor.
HMAC seems to be as good a candidate as any.
However, existing proofs of HMAC as a randomness extractor
\cite{dodis2004randomness,fouque2008hmac,krawczyk2010cryptographic}
are only for keys $c_1$ that are
random, so they may not apply here because HMAC-DRBG's $c_1=0$.
So we have no security proof of
\emph{Instantiate}.  Instead, we assume that the $(k,v)$ produced
by \emph{Instantiate} is ideally random.\\


\noindent\textbf{The ugly.}
It is hard to reason about NIST's decision to update the PRG's internal state vector $v$ immediately after re-keying the PRF in the same call to \emph{Generate}. This decision makes it difficult to define hybrids that replace a PRF on each call, since the PRF changes and is used before a call is over. We solve this problem by moving each $v$-update to the beginning of the next call and proving that the new sequence of programs is indistinguishable to the adversary, and we continue the proof with the rewritten \emph{Generate}.\\


\noindent\textbf{How verification helped.} The process of mechanizing our proof and working in Coq and FCF enabled us to state definitions, functions, and games in a language that was expressive, precise, and executable. This benefited our proof in two main ways.

First, working in this environment prevented us from making subtle mistakes in dealing with a tricky hybrid argument. For example, for every lemma bounding the difference between adjacent hybrids $H_i$ and $H_{i+1}$, Coq requires the lemma to be parametrized by $i$. This helped early in the proof; it forced us to realize that there were multiple PRF adversaries: one for each hybrid $H_i$.

Additionally, Coq prevented us from making suble off-by-one errors in the hybrid argument. It's easy to mix up the total number of hybrids for a list of blocks of $n$ elements---is it $n$ or $n+1$ hybrids? It's also easy to mix up whether the oracle is replaced in the $i$th call or the $i+1$th call. If the numbering is wrong, the relevant lemmas will become unprovable.

Second, working in this environment allowed us to ``execute'' computations in lemmas and test if they held true on small examples, which helped us spot typos in the proof. For example, in \emph{Generate\_v}, we had typed $\mathrm{ret} \s (\mathit{bits}, (k', v'))$ instead of $\mathrm{ret} \s (\mathit{bits}, (k', v''))$, which made one lemma unprovable until we fixed it. Similar bugs have been found in real-world code because a typo or mistake rendered the code unverifiable. \kye{Cite or delete. I'm 100\% sure I read an email on Coq-club about verifying some TLS-like thing that failed because of a typo in the code, but I can't dig it up.}\\


\noindent\textbf{The trusted base.}
As explained in \S\ref{sec:intro}, a key part of
the trusted base of any machine-checked proof
is \emph{the statement of the theorem and assumptions.}
Our theorem is stated as 200 lines of Coq,
(pseudorandom properties of HMAC-DRBG),
on top of about 50 lines of supporting FCF definitions
defining the notions of pseudorandomness, etc.
We rely on assumptions related to collision-resistance of SHA \cite[\S 4.2]{beringer15:hmac}, which means that
the 169-line definition of SHA in Coq
is also part of the trusted base.

On the other hand, the definitions of the HMAC function, the HMAC-DRBG function,  and the specifications of internal APIs need not be trusted, because we prove an end-to-end theorem that, regardless of what these functions actually are,
they are correctly implemented by the C code  and they correctly provide a PRF.

\vspace{\baselineskip}
\noindent\textbf{Proof effort.}
The proof of cryptographic security spans about 4500 lines of Coq code---which need not be trusted because it is checked by machine,
but still it has a pleasantly clear structure as explained in \S\ref{sec:fullproof}.
Developing and mechanizing this proof took about
5 person-months.
The C program-correctness proof 
is about 10,200 lines of Coq, and
took 1 person-month
by a new user of Verifiable C.
Theorem~\ref{spec-equiv} (about 700 lines),
along with the connection to the HMAC proofs and other
configuration engineering (included in the 10,200 lines) took
about 1 person-month.
Coq checks all the proofs in about 16.5 minutes (running Coq 8.6 on a 2.9 GHz Intel Core i7 on a single core;  faster on multicore).


\section{Future work}
\label{future}


In this work we focused primarily on the pseudorandomness property of the DRBG. In SP 800-90A, NIST proposes a variety of additional properties, including ``backtracking resistance'' and ``prediction resistance.'' The former property ensures that a compromise of generator state should not affect the security of output generated {\em prior} to the compromise, while the latter property addresses post-compromise security (under the assumption of a continually reseeded generator). Extending our proof to backtracking resistance should
be straightforward.  A proof that HMAC-DRBG is prediction-resistant
would require a careful specification of
the rate at which new entropy must be provided.

Our security proof leaves out some optional parameters such as the personalization nonce and additional input.  We hope to address these in the future.

While we have addressed only a single NIST DRBG construction, future work may prove the security of other widely-used SP 800-90A constructions such as CTR-DRBG and HASH-DRBG (based on block ciphers and hash functions respectively). These generators have similar constructions to HMAC-DRBG, and thus it is likely that our work may be adapted with reasonable modifications. 

HMAC-DRBG's Instantiate function
uses HMAC with fixed nonrandom key as a randomness extractor;
this is not known to be sound (see \S\ref{thebad}).

Concerning implementation correctness, we would like to extend our
framework by integrating provable protection against side-channel
attacks~\cite{DBLP:conf/ccs/BartheBCLP14,DBLP:conf/uss/AlmeidaBBDE16},
including protection against timing attacks that recently affected
implementations of MEE-CBC~\cite{DBLP:journals/iacr/AlmeidaBBD15}.

We hope this work will lead to formal verification efforts on other widely used cryptographic software packages. Ideally, this type of verification could replace or supplement the use of test vectors, which are currently used by organizations such as the U.S. Cryptographic Module Validation Program to validate DRBG correctness. 








\section{Conclusion}

We have created the first end-to-end formal security-and-correctness
verification of a real-world
random number generator,
and the first machine-checked proofs of HMAC-DRBG security and of
a correct implementation of HMAC-DRBG.  We can say with very high confidence
that HMAC-DRBG and its mbedTLS implementation are secure in their basic
functions against a nonadaptive adversary.
There are no bugs (including buffer overruns) and no backdoors,
purposeful or otherwise.
We have not proved security of the \emph{Instantiate} algorithm
(though we have proved that mbedTLS implements it correctly).
Our proof is modular; thus, it is portable to
other implementations of HMAC-DRBG.  Our proof should be extensible to other
features and properties of HMAC-DRBG, with limitations as discussed
in \S\ref{future}.

Machine-checked proofs of cryptographic security should be a required
part of the NIST standardization process.
This would likely have detected serious generator flaws such as those that occurred in Juniper NetScreen devices~\cite{Checkoway16}.  And though HMAC-DRBG is a good DRBG, proving the security of the construction at
NIST standardization time would likely have resulted in a more elegant construction, and would have allowed NIST 
to clearly specify the security assumptions on HMAC required by $\instantiate$.

Finally, implementations of cryptographic standards should come with
machine-checked proofs of correctness.  This would have prevented
the HeartBleed bug, and probably
prevented the Debian/OpenSSL fiasco (see the appendix).










\paragraph{Acknowledgments.}
This research was supported in part by DARPA agreement FA8750-12-2-0293
and by NSF Grant CCF-1521602.
{\small
The U.S. Government is authorized
to reproduce and distribute reprints for Governmental purposes
notwithstanding any copyright notation thereon. The views and
conclusions contained herein are those of the authors and should not
be interpreted as necessarily representing the official policies or
endorsements, either expressed or implied, of DARPA or the
U.S. Government.}

\bibliographystyle{ACM-Reference-Format}
\bibliography{references} 


\begin{thebibliography}{00}


\ifx \showCODEN    \undefined \def \showCODEN     #1{\unskip}     \fi
\ifx \showDOI      \undefined \def \showDOI       #1{{\tt DOI:}\penalty0{#1}\ }
  \fi
\ifx \showISBNx    \undefined \def \showISBNx     #1{\unskip}     \fi
\ifx \showISBNxiii \undefined \def \showISBNxiii  #1{\unskip}     \fi
\ifx \showISSN     \undefined \def \showISSN      #1{\unskip}     \fi
\ifx \showLCCN     \undefined \def \showLCCN      #1{\unskip}     \fi
\ifx \shownote     \undefined \def \shownote      #1{#1}          \fi
\ifx \showarticletitle \undefined \def \showarticletitle #1{#1}   \fi
\ifx \showURL      \undefined \def \showURL       #1{#1}          \fi
\providecommand\bibfield[2]{#2}
\providecommand\bibinfo[2]{#2}
\providecommand\natexlab[1]{#1}
\providecommand\showeprint[2][]{arXiv:#2}

\bibitem[\protect\citeauthoryear{Affeldt, Nowak, and Yamada}{Affeldt
  et~al\mbox{.}}{2012}]%
        {affeldt2012certifying}
\bibfield{author}{\bibinfo{person}{Reynald Affeldt}, \bibinfo{person}{David
  Nowak}, {and} \bibinfo{person}{Kiyoshi Yamada}.}
  \bibinfo{year}{2012}\natexlab{}.
\newblock \showarticletitle{Certifying assembly with formal security proofs:
  The case of {BBS}}.
\newblock \bibinfo{journal}{{\em Science of Computer Programming\/}}
  \bibinfo{volume}{77}, \bibinfo{number}{10} (\bibinfo{year}{2012}),
  \bibinfo{pages}{1058--1074}.
\newblock


\bibitem[\protect\citeauthoryear{Almeida, Barbosa, Barthe, and
  Dupressoir}{Almeida et~al\mbox{.}}{2015}]%
        {DBLP:journals/iacr/AlmeidaBBD15}
\bibfield{author}{\bibinfo{person}{Jos{\'{e}}~Bacelar Almeida},
  \bibinfo{person}{Manuel Barbosa}, \bibinfo{person}{Gilles Barthe}, {and}
  \bibinfo{person}{Fran{\c{c}}ois Dupressoir}.}
  \bibinfo{year}{2015}\natexlab{}.
\newblock \showarticletitle{Verifiable side-channel security of cryptographic
  implementations: constant-time {MEE-CBC}}.
\newblock \bibinfo{journal}{{\em {IACR} Cryptology ePrint Archive\/}}
  \bibinfo{volume}{2015} (\bibinfo{year}{2015}), \bibinfo{pages}{1241}.
\newblock
\showURL{%
\url{http://eprint.iacr.org/2015/1241}}


\bibitem[\protect\citeauthoryear{Almeida, Barbosa, Barthe, Dupressoir, and
  Emmi}{Almeida et~al\mbox{.}}{2016}]%
        {DBLP:conf/uss/AlmeidaBBDE16}
\bibfield{author}{\bibinfo{person}{Jos{\'{e}}~Bacelar Almeida},
  \bibinfo{person}{Manuel Barbosa}, \bibinfo{person}{Gilles Barthe},
  \bibinfo{person}{Fran{\c{c}}ois Dupressoir}, {and} \bibinfo{person}{Michael
  Emmi}.} \bibinfo{year}{2016}\natexlab{}.
\newblock \showarticletitle{Verifying Constant-Time Implementations}. In
  \bibinfo{booktitle}{{\em 25th {USENIX} Security Symposium, {USENIX} Security
  16}}, \bibfield{editor}{\bibinfo{person}{Thorsten Holz} {and}
  \bibinfo{person}{Stefan Savage}} (Eds.). \bibinfo{publisher}{{USENIX}
  Association}, \bibinfo{pages}{53--70}.
\newblock
\showURL{%
\url{https://www.usenix.org/conference/usenixsecurity16/technical-sessions/presentation/almeida}}


\bibitem[\protect\citeauthoryear{Appel}{Appel}{2015}]%
        {appel15:sha}
\bibfield{author}{\bibinfo{person}{Andrew~W. Appel}.}
  \bibinfo{year}{2015}\natexlab{}.
\newblock \showarticletitle{Verification of a Cryptographic Primitive:
  {SHA}-256}.
\newblock \bibinfo{journal}{{\em ACM Trans. on Programming Languages and
  Systems\/}} \bibinfo{volume}{37}, \bibinfo{number}{2} (\bibinfo{date}{April}
  \bibinfo{year}{2015}), \bibinfo{pages}{7:1--7:31}.
\newblock


\bibitem[\protect\citeauthoryear{Appel, Dockins, Hobor, Beringer, Dodds,
  Stewart, Blazy, and Leroy}{Appel et~al\mbox{.}}{2014}]%
        {appel14:plcc}
\bibfield{author}{\bibinfo{person}{Andrew~W. Appel}, \bibinfo{person}{Robert
  Dockins}, \bibinfo{person}{Aquinas Hobor}, \bibinfo{person}{Lennart
  Beringer}, \bibinfo{person}{Josiah Dodds}, \bibinfo{person}{Gordon Stewart},
  \bibinfo{person}{Sandrine Blazy}, {and} \bibinfo{person}{Xavier Leroy}.}
  \bibinfo{year}{2014}\natexlab{}.
\newblock \bibinfo{booktitle}{{\em Program Logics for Certified Compilers}}.
\newblock \bibinfo{publisher}{Cambridge University Press},
  \bibinfo{address}{New York}.
\newblock


\bibitem[\protect\citeauthoryear{Barker and Kelsey}{Barker and Kelsey}{2012}]%
        {sp800-90A}
\bibfield{author}{\bibinfo{person}{Elaine Barker} {and} \bibinfo{person}{John
  Kelsey}.} \bibinfo{year}{2012}\natexlab{}.
\newblock \bibinfo{booktitle}{{\em Special Publication 800-90A: Recommendation
  for Random Number Generation Using Deterministic Random Bit Generators}}.
\newblock \bibinfo{type}{{T}echnical {R}eport} 800-90A.
  \bibinfo{institution}{National Institute of Standards and Technology}.
\newblock
\showURL{%
\url{http://csrc.nist.gov/publications/nistpubs/800-90A/SP800-90A.pdf}}


\bibitem[\protect\citeauthoryear{Barthe, Betarte, Campo, Luna, and
  Pichardie}{Barthe et~al\mbox{.}}{2014}]%
        {DBLP:conf/ccs/BartheBCLP14}
\bibfield{author}{\bibinfo{person}{Gilles Barthe}, \bibinfo{person}{Gustavo
  Betarte}, \bibinfo{person}{Juan~Diego Campo}, \bibinfo{person}{Carlos~Daniel
  Luna}, {and} \bibinfo{person}{David Pichardie}.}
  \bibinfo{year}{2014}\natexlab{}.
\newblock \showarticletitle{System-level Non-interference for Constant-time
  Cryptography}. In \bibinfo{booktitle}{{\em Proceedings of the 2014 {ACM}
  {SIGSAC} Conference on Computer and Communications Security}},
  \bibfield{editor}{\bibinfo{person}{Gail{-}Joon Ahn}, \bibinfo{person}{Moti
  Yung}, {and} \bibinfo{person}{Ninghui Li}} (Eds.).
  \bibinfo{publisher}{{ACM}}, \bibinfo{pages}{1267--1279}.
\newblock
\showISBNx{978-1-4503-2957-6}
\showDOI{%
\url{https://doi.org/10.1145/2660267.2660283}}


\bibitem[\protect\citeauthoryear{Becker, Regazzoni, Paar, and Burleson}{Becker
  et~al\mbox{.}}{2013}]%
        {becker2013stealthy}
\bibfield{author}{\bibinfo{person}{Georg~T. Becker}, \bibinfo{person}{Francesco
  Regazzoni}, \bibinfo{person}{Christof Paar}, {and} \bibinfo{person}{Wayne~P.
  Burleson}.} \bibinfo{year}{2013}\natexlab{}.
\newblock \showarticletitle{Stealthy Dopant-level Hardware Trojans}. In
  \bibinfo{booktitle}{{\em International Workshop on Cryptographic Hardware and
  Embedded Systems}}. \bibinfo{publisher}{Springer}, \bibinfo{address}{Berlin},
  \bibinfo{pages}{197--214}.
\newblock


\bibitem[\protect\citeauthoryear{Bellare}{Bellare}{2006}]%
        {bellare2006new}
\bibfield{author}{\bibinfo{person}{Mihir Bellare}.}
  \bibinfo{year}{2006}\natexlab{}.
\newblock \showarticletitle{New proofs for NMAC and HMAC: Security without
  collision-resistance}. In \bibinfo{booktitle}{{\em Annual International
  Cryptology Conference}}. \bibinfo{publisher}{Springer},
  \bibinfo{pages}{602--619}.
\newblock


\bibitem[\protect\citeauthoryear{Bellare and Rogaway}{Bellare and
  Rogaway}{2006}]%
        {bellare2006security}
\bibfield{author}{\bibinfo{person}{Mihir Bellare} {and}
  \bibinfo{person}{Phillip Rogaway}.} \bibinfo{year}{2006}\natexlab{}.
\newblock \showarticletitle{The security of triple encryption and a framework
  for code-based game-playing proofs}. In \bibinfo{booktitle}{{\em Annual
  International Conference on the Theory and Applications of Cryptographic
  Techniques}}. \bibinfo{publisher}{Springer}, \bibinfo{pages}{409--426}.
\newblock


\bibitem[\protect\citeauthoryear{Beringer, Petcher, Ye, and Appel}{Beringer
  et~al\mbox{.}}{2015}]%
        {beringer15:hmac}
\bibfield{author}{\bibinfo{person}{Lennart Beringer}, \bibinfo{person}{Adam
  Petcher}, \bibinfo{person}{Katherine~Q. Ye}, {and} \bibinfo{person}{Andrew~W.
  Appel}.} \bibinfo{year}{2015}\natexlab{}.
\newblock \showarticletitle{Verified Correctness and Security of {OpenSSL}
  {HMAC}}. In \bibinfo{booktitle}{{\em 24th USENIX Security Symposium}}.
  \bibinfo{publisher}{USENIX Assocation}, \bibinfo{pages}{207--221}.
\newblock


\bibitem[\protect\citeauthoryear{Bernstein, Chang, Cheng, Chou, Heninger,
  Lange, and van Someren}{Bernstein et~al\mbox{.}}{2013}]%
        {Bernstein13}
\bibfield{author}{\bibinfo{person}{Daniel~J. Bernstein},
  \bibinfo{person}{Yun-An Chang}, \bibinfo{person}{Chen-Mou Cheng},
  \bibinfo{person}{Li-Ping Chou}, \bibinfo{person}{Nadia Heninger},
  \bibinfo{person}{Tanja Lange}, {and} \bibinfo{person}{Nicko van Someren}.}
  \bibinfo{year}{2013}\natexlab{}.
\newblock \showarticletitle{Factoring {RSA} Keys from Certified Smart Cards:
  {Coppersmith} in the Wild}. In \bibinfo{booktitle}{{\em ASIACRYPT 2013: 19th
  International Conference on the Theory and Application of Cryptology and
  Information Security, Proceedings Part II}}. \bibinfo{publisher}{Springer},
  \bibinfo{address}{Berlin}, \bibinfo{pages}{341--360}.
\newblock
\showISBNx{978-3-642-42045-0}
\showDOI{%
\url{https://doi.org/10.1007/978-3-642-42045-0_18}}


\bibitem[\protect\citeauthoryear{Bernstein, Lange, and Niederhagen}{Bernstein
  et~al\mbox{.}}{2016}]%
        {DBLP:conf/birthday/BernsteinLN16}
\bibfield{author}{\bibinfo{person}{Daniel~J. Bernstein}, \bibinfo{person}{Tanja
  Lange}, {and} \bibinfo{person}{Ruben Niederhagen}.}
  \bibinfo{year}{2016}\natexlab{}.
\newblock \showarticletitle{Dual {EC:} {A} Standardized Back Door}. In
  \bibinfo{booktitle}{{\em The New Codebreakers - Essays Dedicated to David
  Kahn on the Occasion of His 85th Birthday}} {\em (\bibinfo{series}{Lecture
  Notes in Computer Science})}, \bibfield{editor}{\bibinfo{person}{Peter Y.~A.
  Ryan}, \bibinfo{person}{David Naccache}, {and}
  \bibinfo{person}{Jean{-}Jacques Quisquater}} (Eds.),
  Vol.~\bibinfo{volume}{9100}. \bibinfo{publisher}{Springer},
  \bibinfo{pages}{256--281}.
\newblock
\showISBNx{978-3-662-49300-7}
\showDOI{%
\url{https://doi.org/10.1007/978-3-662-49301-4_17}}


\bibitem[\protect\citeauthoryear{Bhargavan, Fournet, Kohlweiss, Pironti, and
  Strub}{Bhargavan et~al\mbox{.}}{2013}]%
        {bhargavan2013implementing}
\bibfield{author}{\bibinfo{person}{Karthikeyan Bhargavan},
  \bibinfo{person}{C{\'e}dric Fournet}, \bibinfo{person}{Markulf Kohlweiss},
  \bibinfo{person}{Alfredo Pironti}, {and} \bibinfo{person}{Pierre-Yves
  Strub}.} \bibinfo{year}{2013}\natexlab{}.
\newblock \showarticletitle{Implementing {TLS} with verified cryptographic
  security}. In \bibinfo{booktitle}{{\em 2013 IEEE Symposium on Security and
  Privacy}}. \bibinfo{publisher}{IEEE}, \bibinfo{pages}{445--459}.
\newblock


\bibitem[\protect\citeauthoryear{Checkoway, Maskiewicz, Garman, Fried, Cohney,
  Green, Heninger, Weinmann, Rescorla, and Shacham}{Checkoway
  et~al\mbox{.}}{2016}]%
        {Checkoway16}
\bibfield{author}{\bibinfo{person}{Stephen Checkoway}, \bibinfo{person}{Jacob
  Maskiewicz}, \bibinfo{person}{Christina Garman}, \bibinfo{person}{Joshua
  Fried}, \bibinfo{person}{Shaanan Cohney}, \bibinfo{person}{Matthew Green},
  \bibinfo{person}{Nadia Heninger}, \bibinfo{person}{Ralf-Philipp Weinmann},
  \bibinfo{person}{Eric Rescorla}, {and} \bibinfo{person}{Hovav Shacham}.}
  \bibinfo{year}{2016}\natexlab{}.
\newblock \showarticletitle{A Systematic Analysis of the {Juniper Dual EC}
  Incident}. In \bibinfo{booktitle}{{\em CCS '16: 23rd ACM Conference on
  Computer and Communications Security}}. \bibinfo{publisher}{ACM},
  \bibinfo{address}{New York, NY, USA}, \bibinfo{pages}{468--479}.
\newblock
\showISBNx{978-1-4503-4139-4}
\showDOI{%
\url{https://doi.org/10.1145/2976749.2978395}}


\bibitem[\protect\citeauthoryear{Checkoway, Niederhagen, Everspaugh, Green,
  Lange, Ristenpart, Bernstein, Maskiewicz, Shacham, and Fredrikson}{Checkoway
  et~al\mbox{.}}{2014}]%
        {Checkoway14}
\bibfield{author}{\bibinfo{person}{Stephen Checkoway}, \bibinfo{person}{Ruben
  Niederhagen}, \bibinfo{person}{Adam Everspaugh}, \bibinfo{person}{Matthew
  Green}, \bibinfo{person}{Tanja Lange}, \bibinfo{person}{Thomas Ristenpart},
  \bibinfo{person}{Daniel~J. Bernstein}, \bibinfo{person}{Jake Maskiewicz},
  \bibinfo{person}{Hovav Shacham}, {and} \bibinfo{person}{Matthew Fredrikson}.}
  \bibinfo{year}{2014}\natexlab{}.
\newblock \showarticletitle{On the Practical Exploitability of {Dual EC} in
  {TLS} Implementations}. In \bibinfo{booktitle}{{\em Usenix Security '14}}.
  \bibinfo{publisher}{USENIX Association}, \bibinfo{address}{San Diego, CA},
  \bibinfo{pages}{319--335}.
\newblock
\showISBNx{978-1-931971-15-7}
\showURL{%
\url{https://www.usenix.org/conference/usenixsecurity14/technical-sessions/presentation/checkoway}}


\bibitem[\protect\citeauthoryear{Cox}{Cox}{2008}]%
        {cox08:debianfiasco}
\bibfield{author}{\bibinfo{person}{Russ Cox}.} \bibinfo{year}{2008}\natexlab{}.
\newblock \bibinfo{title}{Lessons from the Debian/OpenSSL Fiasco}.
  (\bibinfo{date}{21 May} \bibinfo{year}{2008}).
\newblock
\showURL{%
\url{https://research.swtch.com/openssl}}


\bibitem[\protect\citeauthoryear{Dodis, Ganesh, Golovnev, Juels, and
  Ristenpart}{Dodis et~al\mbox{.}}{2015}]%
        {dodis2015formal}
\bibfield{author}{\bibinfo{person}{Yevgeniy Dodis}, \bibinfo{person}{Chaya
  Ganesh}, \bibinfo{person}{Alexander Golovnev}, \bibinfo{person}{Ari Juels},
  {and} \bibinfo{person}{Thomas Ristenpart}.} \bibinfo{year}{2015}\natexlab{}.
\newblock \showarticletitle{A formal treatment of backdoored pseudorandom
  generators}. In \bibinfo{booktitle}{{\em EUROCRYPT (1)}}.
  \bibinfo{pages}{101--126}.
\newblock


\bibitem[\protect\citeauthoryear{Dodis, Gennaro, H{\aa}stad, Krawczyk, and
  Rabin}{Dodis et~al\mbox{.}}{2004}]%
        {dodis2004randomness}
\bibfield{author}{\bibinfo{person}{Yevgeniy Dodis}, \bibinfo{person}{Rosario
  Gennaro}, \bibinfo{person}{Johan H{\aa}stad}, \bibinfo{person}{Hugo
  Krawczyk}, {and} \bibinfo{person}{Tal Rabin}.}
  \bibinfo{year}{2004}\natexlab{}.
\newblock \showarticletitle{Randomness extraction and key derivation using the
  {CBC}, cascade and {HMAC} modes}. In \bibinfo{booktitle}{{\em Annual
  International Cryptology Conference}}. \bibinfo{publisher}{Springer},
  \bibinfo{pages}{494--510}.
\newblock


\bibitem[\protect\citeauthoryear{D\"{o}rre and Klebanov}{D\"{o}rre and
  Klebanov}{2016}]%
        {DorreKlebanov2016}
\bibfield{author}{\bibinfo{person}{Felix D\"{o}rre} {and}
  \bibinfo{person}{Vladimir Klebanov}.} \bibinfo{year}{2016}\natexlab{}.
\newblock \showarticletitle{Practical Detection of Entropy Loss in
  Pseudo-Random Number Generators}. In \bibinfo{booktitle}{{\em CCS'16: 2016
  ACM SIGSAC Conference on Computer and Communications Security}}.
  \bibinfo{publisher}{ACM}, \bibinfo{address}{New York},
  \bibinfo{pages}{678--689}.
\newblock


\bibitem[\protect\citeauthoryear{Fouque, Pointcheval, and Zimmer}{Fouque
  et~al\mbox{.}}{2008}]%
        {fouque2008hmac}
\bibfield{author}{\bibinfo{person}{Pierre-Alain Fouque}, \bibinfo{person}{David
  Pointcheval}, {and} \bibinfo{person}{S{\'e}bastien Zimmer}.}
  \bibinfo{year}{2008}\natexlab{}.
\newblock \showarticletitle{{HMAC} is a randomness extractor and applications
  to {TLS}}. In \bibinfo{booktitle}{{\em ACM symposium on Information, Computer
  and Communications Security}}. \bibinfo{publisher}{ACM},
  \bibinfo{pages}{21--32}.
\newblock


\bibitem[\protect\citeauthoryear{Goodin}{Goodin}{2013}]%
        {goodin13}
\bibfield{author}{\bibinfo{person}{Dan Goodin}.}
  \bibinfo{year}{2013}\natexlab{}.
\newblock \showarticletitle{Google confirms critical {Android} crypto flaw used
  in \$5,700 {Bitcoin} heist}.
\newblock \bibinfo{journal}{{\em Ars Technica\/}} (\bibinfo{date}{14 Aug.}
  \bibinfo{year}{2013}).
\newblock
\showURL{%
\url{https://arstechnica.com/security/2013/08/google-confirms-critical-android-crypto-flaw-used-in-5700-bitcoin-heist/}}


\bibitem[\protect\citeauthoryear{Heninger, Durumeric, Wustrow, and
  Halderman}{Heninger et~al\mbox{.}}{2012}]%
        {Heninger12}
\bibfield{author}{\bibinfo{person}{Nadia Heninger}, \bibinfo{person}{Zakir
  Durumeric}, \bibinfo{person}{Eric Wustrow}, {and} \bibinfo{person}{J.~Alex
  Halderman}.} \bibinfo{year}{2012}\natexlab{}.
\newblock \showarticletitle{Mining Your {Ps} and {Qs}: Detection of Widespread
  Weak Keys in Network Devices}. In \bibinfo{booktitle}{{\em 21st USENIX
  Security Symposium}}. \bibinfo{publisher}{USENIX Association},
  \bibinfo{pages}{205--220}.
\newblock


\bibitem[\protect\citeauthoryear{Hirose}{Hirose}{2008}]%
        {hirose2008security}
\bibfield{author}{\bibinfo{person}{Shoichi Hirose}.}
  \bibinfo{year}{2008}\natexlab{}.
\newblock \showarticletitle{Security analysis of DRBG using HMAC in NIST SP
  800-90}. In \bibinfo{booktitle}{{\em International Workshop on Information
  Security Applications}}. \bibinfo{publisher}{Springer},
  \bibinfo{pages}{278--291}.
\newblock


\bibitem[\protect\citeauthoryear{Inc.}{Inc.}{2016}]%
        {FortinetBug}
\bibfield{author}{\bibinfo{person}{Fortinet Inc.}}
  \bibinfo{year}{2016}\natexlab{}.
\newblock \bibinfo{title}{CVE-2016-8492}.
\newblock \bibinfo{howpublished}{Available at
  \url{https://fortiguard.com/advisory/FG-IR-16-067}|}.
  (\bibinfo{year}{2016}).
\newblock


\bibitem[\protect\citeauthoryear{ISO}{ISO}{2012}]%
        {iso19790}
\bibfield{author}{\bibinfo{person}{ISO}.} \bibinfo{year}{2012}\natexlab{}.
\newblock \bibinfo{title}{{ISO 19790:2012: Security requirements for
  cryptographic modules}}.
\newblock \bibinfo{howpublished}{Available at
  \url{https://www.iso.org/standard/52906.html}}.   (\bibinfo{date}{August}
  \bibinfo{year}{2012}).
\newblock


\bibitem[\protect\citeauthoryear{Krawczyk}{Krawczyk}{2010}]%
        {krawczyk2010cryptographic}
\bibfield{author}{\bibinfo{person}{Hugo Krawczyk}.}
  \bibinfo{year}{2010}\natexlab{}.
\newblock \showarticletitle{Cryptographic extraction and key derivation: The
  HKDF scheme}. In \bibinfo{booktitle}{{\em Annual Cryptology Conference}}.
  Springer, \bibinfo{pages}{631--648}.
\newblock


\bibitem[\protect\citeauthoryear{Lenstra, Hughes, Augier, Bos, Kleinjung, and
  Wachter}{Lenstra et~al\mbox{.}}{2012}]%
        {Lenstra12}
\bibfield{author}{\bibinfo{person}{Arjen~K. Lenstra}, \bibinfo{person}{James~P.
  Hughes}, \bibinfo{person}{Maxime Augier}, \bibinfo{person}{Joppe~W. Bos},
  \bibinfo{person}{Thorsten Kleinjung}, {and} \bibinfo{person}{Christophe
  Wachter}.} \bibinfo{year}{2012}\natexlab{}.
\newblock \bibinfo{title}{{Ron} was wrong, {Whit} is right}.
\newblock \bibinfo{howpublished}{Cryptology ePrint Archive, Report 2012/064}.
  (\bibinfo{year}{2012}).
\newblock
\newblock
\shownote{\url{http://eprint.iacr.org/2012/064}.}


\bibitem[\protect\citeauthoryear{Moore}{Moore}{2008}]%
        {debianflaw}
\bibfield{author}{\bibinfo{person}{H.~D. Moore}.}
  \bibinfo{year}{2008}\natexlab{}.
\newblock \bibinfo{title}{{Debian OpenSSL Flaw}}.
\newblock \bibinfo{howpublished}{Available at
  \url{https://hdm.io/tools/debian-openssl/}}.   (\bibinfo{year}{2008}).
\newblock


\bibitem[\protect\citeauthoryear{{National Institute of Standards and
  Technology}}{{National Institute of Standards and Technology}}{2014}]%
        {NISTRNGToolkit}
\bibfield{author}{\bibinfo{person}{{National Institute of Standards and
  Technology}}.} \bibinfo{year}{2014}\natexlab{}.
\newblock \bibinfo{title}{{NIST RNG Cryptographic Toolkit}}.
\newblock \bibinfo{howpublished}{Available at
  \url{http://csrc.nist.gov/groups/ST/toolkit/rng/}}.   (\bibinfo{date}{July}
  \bibinfo{year}{2014}).
\newblock


\bibitem[\protect\citeauthoryear{{National Institute of Standards and
  Technology}}{{National Institute of Standards and Technology}}{2017}]%
        {NISTDRBGtestvectors}
\bibfield{author}{\bibinfo{person}{{National Institute of Standards and
  Technology}}.} \bibinfo{year}{2017}\natexlab{}.
\newblock \bibinfo{title}{CAVP Testing: Random Number Generators}.
\newblock   (\bibinfo{year}{2017}).
\newblock
\newblock
\shownote{http://csrc.nist.gov/groups/STM/cavp/random-number-generation.html.}


\bibitem[\protect\citeauthoryear{{National Institute of Standards and
  Technology (NIST)}}{{National Institute of Standards and Technology
  (NIST)}}{2001}]%
        {fips140-2}
\bibfield{author}{\bibinfo{person}{{National Institute of Standards and
  Technology (NIST)}}.} \bibinfo{year}{2001}\natexlab{}.
\newblock \bibinfo{title}{{FIPS 140-2: Security Requirements for Cryptographic
  Modules}}.
\newblock \bibinfo{howpublished}{Available at
  \url{http://nvlpubs.nist.gov/nistpubs/FIPS/NIST.FIPS.140-2.pdf}}.
  (\bibinfo{date}{May} \bibinfo{year}{2001}).
\newblock


\bibitem[\protect\citeauthoryear{Perlroth, Larson, and Shane}{Perlroth
  et~al\mbox{.}}{2013}]%
        {NYT_DualEC_2013}
\bibfield{author}{\bibinfo{person}{Nicole Perlroth}, \bibinfo{person}{Jeff
  Larson}, {and} \bibinfo{person}{Scott Shane}.}
  \bibinfo{year}{2013}\natexlab{}.
\newblock \showarticletitle{{N.S.A.} Able to Foil Basic Safeguards of Privacy
  on Web}.
\newblock \bibinfo{journal}{{\em The New York Times\/}} (\bibinfo{date}{6
  Sept.} \bibinfo{year}{2013}).
\newblock
\showURL{%
\url{http://www.nytimes.com/2013/09/06/us/nsa-foils-much-internet-encryption.html}}


\bibitem[\protect\citeauthoryear{Petcher}{Petcher}{2015}]%
        {petcher15:phd}
\bibfield{author}{\bibinfo{person}{Adam Petcher}.}
  \bibinfo{year}{2015}\natexlab{}.
\newblock {\em \bibinfo{title}{A Foundational Proof Framework for
  Cryptography}}.
\newblock \bibinfo{thesistype}{Ph.D. Dissertation}. \bibinfo{school}{Harvard
  University}.
\newblock


\bibitem[\protect\citeauthoryear{Petcher and Morrisett}{Petcher and
  Morrisett}{2015}]%
        {petcher15:post}
\bibfield{author}{\bibinfo{person}{Adam Petcher} {and} \bibinfo{person}{Greg
  Morrisett}.} \bibinfo{year}{2015}\natexlab{}.
\newblock \showarticletitle{The Foundational Cryptography Framework}. In
  \bibinfo{booktitle}{{\em 4th International Conference on Principles of
  Security and Trust (POST)}}. \bibinfo{publisher}{Springer LNCS 9036},
  \bibinfo{address}{Berlin}, \bibinfo{pages}{53--72}.
\newblock


\bibitem[\protect\citeauthoryear{Shumow and Ferguson}{Shumow and
  Ferguson}{2007}]%
        {Shumow07}
\bibfield{author}{\bibinfo{person}{Dan Shumow} {and} \bibinfo{person}{Niels
  Ferguson}.} \bibinfo{year}{2007}\natexlab{}.
\newblock \bibinfo{title}{On the Possibility of a Back Door in the {NIST
  SP800-90} {Dual Ec Prng}}.
\newblock \bibinfo{howpublished}{CRYPTO 2007 Rump Session}.
  (\bibinfo{date}{Aug.} \bibinfo{year}{2007}).
\newblock
\showURL{%
\url{http://rump2007.cr.yp.to/15-shumow.pdf}}


\bibitem[\protect\citeauthoryear{Sidorenko and Schoenmakers}{Sidorenko and
  Schoenmakers}{2005}]%
        {sidorenko05}
\bibfield{author}{\bibinfo{person}{Andrey Sidorenko} {and}
  \bibinfo{person}{Berry Schoenmakers}.} \bibinfo{year}{2005}\natexlab{}.
\newblock \showarticletitle{Concrete Security of the {Blum-Blum-Shub}
  Pseudorandom Generator}. In \bibinfo{booktitle}{{\em Cryptography and Coding:
  10th IMA International Conference}}. \bibinfo{publisher}{Springer LNCS 3796},
  \bibinfo{address}{Berlin}, \bibinfo{pages}{355--375}.
\newblock


\bibitem[\protect\citeauthoryear{Thompson}{Thompson}{1984}]%
        {thompson1984reflections}
\bibfield{author}{\bibinfo{person}{Ken Thompson}.}
  \bibinfo{year}{1984}\natexlab{}.
\newblock \showarticletitle{Reflections on trusting trust}.
\newblock \bibinfo{journal}{{\it Commun. ACM}} \bibinfo{volume}{27},
  \bibinfo{number}{8} (\bibinfo{year}{1984}), \bibinfo{pages}{761--763}.
\newblock


\bibitem[\protect\citeauthoryear{Zetter}{Zetter}{2016}]%
        {Wired_Juniper_2015}
\bibfield{author}{\bibinfo{person}{Kim Zetter}.}
  \bibinfo{year}{2016}\natexlab{}.
\newblock \showarticletitle{{New Discovery Around Juniper Backdoor Raises More
  Questions About the Company}}.
\newblock \bibinfo{journal}{{\em {WIRED.com}\/}} (\bibinfo{date}{Jan.}
  \bibinfo{year}{2016}).
\newblock
\showURL{%
\url{https://www.wired.com/2016/01/new-discovery-around-juniper-backdoor-raises-more-questions-about-the-company/}}


\end{thebibliography}

\newpage
\section*{Appendix:  Debian/OpenSSL fiasco}

\label{debianfiasco}

In 2006, in order to eliminate a Purify warning,
a developer at Debian removed a line of code from
the randomness-initializer of OpenSSL as shipped with Debian
\cite{debianflaw}.
This reduced the amount of true entropy for initializing SSH keys to approximately 15 bits, or (in some realistic scenarios) 0 bits.

Cox \cite{cox08:debianfiasco}  has an excellent analysis
of the technical problem \emph{and the social processes}
that led to this failure.  We have no library in Coq for
proving things about social processes, but we can say:

\begin{enumerate}
\item The all-too-clever C code in the OpenSSL RAND\_poll
  and RAND\_add tolerates the use of a partially uninitialized array.
  Cox explains why (a) it's not strictly speaking \emph{wrong} to
  do this and (b) it's still a terrible idea.
  The \emph{Verifiable C} proof system would reject this program.

\item The Debian developer asked about the safety of this change
  on the official OpenSSL developer's mailing list, and got no response.
  Verifiable C cannot help with that.

\item Presumably a (hypothetical) verified implementation distributed
  by OpenSSL would be marked as such (in comments) and distributed
  with its open-source proof, and not so casually modified by the
  Debian developer.

\item Suppose there had been a different insufficient-entropy bug, one not involving
  uninitialized variables.  In such a case, it would not be automatically
  rejected by Verifiable C.  Instead, either (a) the C code could not
  be proved to implement the functional model,
  or (b) the functional model could not be proved to provide sufficient
  entropy at initialization time.  In either case, the bug would be detected.

\item Suppose the functional model and C code were both proved correct;
  the correct program asks an external device for $n$ bits of ``true'' entropy,
  and instead gets $n$ bits with almost no entropy \cite{becker2013stealthy}.
  Our proofs cannot help here.
\end{enumerate}

In summary: Yes, proofs \kye{of what properties?} would probably have prevented the Debian/OpenSSL fiasco.
\awa{Lennart had some remarks about this sentence...}


\end{document}